\documentclass[a4paper,11pt]{llncs}
\usepackage{ifthen}
\newboolean{pgf}
\setboolean{pgf}{true}
\usepackage[colorlinks=true, citecolor= blue, linkcolor= blue, urlcolor= blue, bookmarks=false]{hyperref}
\usepackage[all]{hypcap}

\usepackage{graphicx,color}      
\usepackage{verbatim}
\usepackage[ruled,linesnumbered]{algorithm2e}
\usepackage{amsmath, amsfonts, amssymb}
\usepackage[a4paper,margin=3.72cm,centering]{geometry}
\usepackage{subfig}
\usepackage{ifthen}
\usepackage{url}
\setlength{\algomargin}{1em}
\setlength{\AlCapSkip}{1ex}

\spnewtheorem{fact}{Fact}{\bf}{\it}

\newboolean{includeXXX}
\newboolean{includecut}

\setboolean{includeXXX}{false}
\setboolean{includecut}{false}

\ifthenelse{\boolean{includecut}}{
\newcommand{\cut}[1]{\textcolor{blue}{#1}}

\newcommand{\Rcut}[1]{}  

}
{
\newcommand{\cut}[1]{}

\newcommand{\Rcut}[1]{#1} 

}

\ifthenelse{\boolean{includeXXX}}{
\newcommand{\XXX}[1]{\textcolor{red}{XXX #1 XXX\ }}

}
{
\newcommand{\XXX}[1]{}

}

\newcommand{\gap}[1]{} 



\usepackage{tikz}
\usetikzlibrary{scopes}
\usetikzlibrary{arrows,shapes,calc,shapes,patterns,backgrounds}



\newcommand{\cost}{\text{cost}}

\pagestyle{plain}

\usepackage{xstring}

\newboolean{colour}
\setboolean{colour}{true}


\ifthenelse{\boolean{pgf}}{
\ifthenelse{\boolean{colour}}{
\tikzstyle{c2}=[color=blue!70]
\tikzstyle{c1}=[color=red!70]
\tikzstyle{c3}=[color=orange!70]
\tikzstyle{c6}=[color=pink!70]
\tikzstyle{c5}=[color=cyan!70]
\tikzstyle{c4}=[color=green!70]

}{
\tikzstyle{c1}=[color=black!20]
\tikzstyle{c2}=[color=black!50]
\tikzstyle{c3}=[color=black!75]
\tikzstyle{c4}=[color=gray!70]
\tikzstyle{c5}=[color=black]
\tikzstyle{c6}=[pattern=north west lines]

}

\tikzstyle{c7}=[color=gray!70]
\tikzstyle{c8}=[color=white]
\tikzstyle{ct8}=[color=white]

\newcommand{\drawbox}[2]{\draw[very thick,color=black] (0,0) rectangle (#1,-#2);}

\newcommand{\drawrow}[2]{

		\StrLen{#2}[\l] 

		\scope[shift={(0,-#1)}]

		\foreach \x in {1,2,...,\l} 
			{
				\StrChar{#2}{\x}[\ch]

				\if \ch0

				\else
					\draw[fill,c\ch] (\x-1,0) rectangle +(1,1);
					\draw[color=black!80] (\x-1,0) rectangle +(1,1);
				\fi

			}


		\endscope
}

\newcommand{\drawrowL}[2]{

		\StrLen{#2}[\l] 

		\scope[shift={(0,-#1)}]

		\foreach \x in {1,2,...,\l} 
			{
				\StrChar{#2}{\x}[\ch]

				\if \ch0

				\else
					\draw (\x-0.5,0.37) node[anchor=mid,color=white] {{\bf \ch}};
				\fi

			}


		\endscope
}

}{}


\renewcommand{\geq}{\geqslant}
\renewcommand{\leq}{\leqslant}

\newcommand{\Ptime}{\ensuremath{\mathbf{P}}}
\newcommand{\NPtime}{\ensuremath{\mathbf{NP}}}
\newcommand{\PSPACE}{\ensuremath{\mathbf{PSPACE}}}

\newcommand{\Coloroid}[1]{$#1$-\textsc{Flood-It}}
\newcommand{\ColoroidFree}[1]{$#1$-\textsc{Free-Flood-It}}
\newcommand{\ColoroidF}{\textsc{Free-Flood-It}}
\newcommand{\Floodit}{\textsc{Flood-It}}
\newcommand{\Mnc}{\ensuremath{\max\{m(B)\,|\,B \in B_{n,c}\}}}
\newcommand{\SCS}{\textsc{SCS}}

\newcommand{\recdim}[2]{$#1$$\mspace{1mu}$$\times$$\mspace{1mu}$$#2$}

\title{The Complexity of Flood Filling Games}
\author{Rapha\"{e}l~Clifford \and  Markus~Jalsenius \and Ashley~Montanaro \and Benjamin~Sach}
\institute{Department of Computer Science, University of Bristol, UK}

\begin{document}

\maketitle

\begin{abstract}
We study the complexity of the popular one player combinatorial game known as Flood-It. In this game the player is given an \recdim{n}{n} board of tiles where each tile is allocated one of $c$ colours. The goal is to make the colours of all tiles equal via the shortest possible sequence of flooding operations. In the standard version, a flooding operation consists of the player choosing a colour~$k$, which then changes the colour of all the tiles in the monochromatic region connected to the top left tile to $k$. After this operation has been performed, neighbouring regions which are already of the chosen colour $k$ will then also become connected, thereby extending the monochromatic region of the board. We show that finding the minimum number of flooding operations is \NPtime-hard for $c \geq 3$ and that this even holds when the player can perform flooding operations from any position on the board. However, we show that this `free' variant is in \Ptime\ for $c=2$. We also prove that for an unbounded number of colours, Flood-It remains \NPtime-hard for boards of height at least 3, but is in \Ptime\ for boards of height 2. Next we show how a $(c-1)$ approximation and a randomised $2c/3$ approximation algorithm can be derived, and that no polynomial time constant factor, independent of $c$, approximation algorithm exists unless \Ptime=\NPtime.  We then investigate how many moves are required for the `most demanding' \recdim{n}{n} boards (those requiring the most moves) and show that the number grows as fast as $\Theta(\sqrt{c}\, n)$.    Finally, we consider boards where the colours of the tiles are chosen at random and show that for $c\geq 2$, the number of moves required to flood the whole board is $\Omega(n)$ with high probability.
\end{abstract}

\section{Introduction} \label{sec:intro}
In the popular one player combinatorial game known as Flood-It, each tile of an \recdim{n}{n} board is allocated one of $c$ colours, where $c$ is a parameter of the game.  Two left/right/up/down adjacent tiles are said to be connected if they have the same colour and a (connected) region of the board is defined to be any maximal connected component.   The standard version of the game starts with the player `flooding' the region that contains the top left tile. The flooding operation simply involves changing the colour of all the tiles in the region to be some new colour. However, this also has the effect of connecting the newly flooded region to all neighbouring regions of this colour.    The overall aim is to flood the entire board, that is connect all regions, in as few flooding operations as possible. Every flooding operation changes the colour of the region that contains the top left tile.  Figure~\ref{fig:turns} gives an example of the first few moves of a game. The border shows the outline of the region which has so far been flooded.
\begin{figure}[t]
    \centering
	\tikzstyle{box}=[very thick]
	\begin{tikzpicture}[inner sep=2pt,scale=0.255]

	\scope[shift={(0,0)}]

		\drawrow{1}{131213}
		\drawrow{2}{212323}
		\drawrow{3}{223222}
		\drawrow{4}{231213}
		\drawrow{5}{231333}
		\drawrow{6}{121321}

		\draw[box] (0,0) rectangle +(6,-6);
		\draw[box] (0,0) rectangle +(1,-1);
	\endscope

	\fill[c2] (7,-3) rectangle +(1,1);
	\draw[] (7,-3) rectangle +(1,1);

	\draw[-open triangle 60] (6.5,-4) -- +(2,0);
	\scope[shift={(9,0)}]
		\drawrow{1}{231213}
		\drawrow{2}{212323}
		\drawrow{3}{223222}
		\drawrow{4}{231213}
		\drawrow{5}{231333}
		\drawrow{6}{121321}

		\draw[box] (0,0) rectangle +(6,-6);
		\draw[box] (1,0) -- ++(0,-2) -- ++(1,0) --++ (0,-1) --++ (-1,0) --++ (0,-2) --++(-1,0) ;
	\endscope

	\fill[c3] (16,-3) rectangle +(1,1);
	\draw[] (16,-3) rectangle +(1,1);

	\draw[-open triangle 60] (15.5,-4) -- +(2,0);

	\scope[shift={(18,0)}]
		\drawrow{1}{331213}
		\drawrow{2}{312323}
		\drawrow{3}{333222}
		\drawrow{4}{331213}
		\drawrow{5}{331333}
		\drawrow{6}{121321}

		\draw[box] (0,0) rectangle +(6,-6);
		\draw[box] (2,0) -- ++(0,-1) --++(-1,0)  -- ++(0,-1) --++(2,0) --++ (0,-1) -- ++(-1,0) --++ (0,-2) --++ (-2,0);
	\endscope

	\fill[c1] (25,-3) rectangle +(1,1);
	\draw[] (25,-3) rectangle +(1,1);

	\draw[-open triangle 60] (24.5,-4) -- +(2,0);

	\scope[shift={(27,0)}]
		\drawrow{1}{111213}
		\drawrow{2}{112323}
		\drawrow{3}{111222}
		\drawrow{4}{111213}
		\drawrow{5}{111333}
		\drawrow{6}{121321}

		\draw[box] (0,0) rectangle +(6,-6);
		\draw[box] (3,0) -- ++(0,-1) -- ++(-1,0)  -- ++(0,-1) -- ++(1,0) -- ++ (0,-4) -- ++(-1,0) -- ++(0,1) -- ++(-1,0) -- ++ (0,-1);
	\endscope

	\fill[c2] (34,-3) rectangle +(1,1);
	\draw[] (34,-3) rectangle +(1,1);

	\draw[-open triangle 60] (33.5,-4) -- +(2,0);

	\scope[shift={(36,0)}]
		\drawrow{1}{222213}
		\drawrow{2}{222323}
		\drawrow{3}{222222}
		\drawrow{4}{222213}
		\drawrow{5}{222333}
		\drawrow{6}{222321}

		\draw[box] (0,0) rectangle +(6,-6);
		\draw[box] (4,0) -- ++(0,-1) -- ++(-1,0)  -- ++(0,-1) -- ++(1,0) -- ++ (0,1) -- ++(1,0) -- ++(0,-1) -- ++(1,0) -- ++ (0,-1) -- ++(-2,0) -- ++(0,-1) -- ++(-1,0) -- ++(0,-2);
	\endscope

	\end{tikzpicture}  
    \caption{A sequence of four moves on a \recdim{6}{6} Flood-It board with 3 colours. }
    \label{fig:turns}
\end{figure}
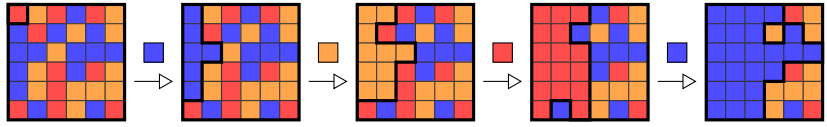

In this paper, we investigate a number of questions inspired by Flood-It. We first show that not only are natural greedy approaches to the game bad, but in fact finding an optimal solution (one which requires the fewest possible moves) for Flood-It is \NPtime-hard for $c \geq 3$, and that this also holds for a variant of the game we call Free-Flood-It where the player can perform flooding operations at any position on the board. On the other hand, we show that solving Free-Flood-It with $c=2$ is in \Ptime.  We also consider the effect of changing the shape of the board, and prove that Flood-It remains \NPtime-hard for rectangular boards of height at least 3, with an unbounded number of colours, but is in \Ptime\ for boards of height 2. As a stepping stone, we also prove \NPtime-hardness of a restricted version of the well-studied shortest common supersequence problem (q.v.).

 Next we show how a $(c-1)$ approximation and a randomised $2c/3$ approximation algorithm for Flood-It can be derived. However, no polynomial time constant factor, independent of $c$, approximation algorithm exists unless \Ptime=\NPtime.  We then consider how many moves are required for the most demanding boards and show that the number grows as fast as $\Theta(\sqrt{c}\, n)$. We say that a board is one of the most demanding boards if it requires at least as many moves as any other board which has the same size and number of colours. Finally, we investigate boards where the colours of the tiles are chosen at random and give a simple proof that for $c\geq 3$, the number of moves required to flood the whole board is $\Omega(n)$ with high probability. We then observe that the same result can in fact be proven for $c \geq 2$ by appealing to previous deep results in percolation theory~\cite{CW1993:random,FN1993:random}; indeed, our work can be seen as a drastic simplification of these results for the case $c \geq 3$.



\vskip 3pt
\noindent {\em History and related work:} Perhaps the most famous recent hardness result involving a popular game is the \NPtime-completeness of Tetris~\cite{DHL2003:Tetris}. Flood-It seems to be a somewhat newer game than Tetris, first making its appearance online in early 2006 courtesy of a company called Lab Pixies.  Since then numerous versions have become available for almost every conceivable platform. We have very recently become aware of a sketch proof by Elad Verbin posted on a blog of the \NPtime-hardness of Flood-It with $6$ colours~\cite{Verbinblog:2009}. Although our work was completed independently, it is interesting to note that there is some similarity to the techniques used in our \NPtime-hardness proof for $c\geq3$ colours.

Independently of this work, Fleischer and Woeginger have studied a closely related game to Flood-It, known as Honey-Bee \cite{FW2010:honey}. This game is also based around repeatedly applying a flood filling operation on a grid. The main differences are that the grid is hexagonal and may contain barriers, and also that there is a two-player variant of the game. In this variant, two players start flood filling from opposite corners, and the goal is to control more of the board than your opponent. Fleischer and Woeginger focus on the computational complexity of Honey-Bee, and consider a number of generalisations of the single player game to different classes of graphs. They prove that some generalisations are \NPtime-hard, while others are in \Ptime. Again, there is some similarity in the techniques used in one of their \NPtime-hardness proofs, although we note that this proof does not immediately apply to Flood-It without some modification. Fleischer and Woeginger also show that the two-player game on arbitrary graphs is \PSPACE-complete.

Another related game whose computational complexity has been studied in detail is known as Clickomania~\cite{BDDFJM2002:Click}.  A rectangular board is initialised in the same way as in Flood-It. The move permitted is for the player to remove a chosen connected monochromatic component of at least two tiles after which any blocks above it will fall down as far as they can. Finding an optimal solution to Clickomania has been shown to be \NPtime-hard for two or more columns and five or more colours, or five or more columns and three or more colours.

There is also existing work on a majority-based recolouring game on graphs \cite{Berger:2001,Flocchini:2003,Peleg:1998}. The game is played over a number of rounds on a simple undirected graph where each vertex is initially coloured white or black. In each round each vertex is recoloured by the colour of the majority of its neighbours. The player's only interaction is to determine the set of vertices which are initially coloured white. The goal is to pick the smallest possible set of vertices such that after a finite number of rounds, all vertices are white.

Flood-It can be thought of as a model for a number of different (possibly not entirely) real world applications.  For example, our results supplement that of recent work on zombie infestation~\cite{MHIS2009:Zombies} if one regards the flooding operation as one where the minds of neighbouring non-zombies are infected by those who have already been turned into zombies.  A separate but no less significant line of research considers the complexity of tools commonly provided with Microsoft Windows. Previous work has shown that aspects of Excel~\cite{IMO2009:Draw} and even Minesweeper~\cite{Kaye2000:Minesweeper} are \NPtime-complete. Our work extends this line of research by showing that flood filling in Microsoft's Paint application is also \NPtime-hard.

\subsection{Notation and definitions}

Let $B_{n,c}$ be the set of all \recdim{n}{n} boards with at most $c$ colours. We write $m(B)$ for the minimum number of moves required to flood a board $B \in B_{n,c}$. We will refer to rows and columns in a board in the usual manner. We further denote the colour of the tile in row $i$ and column $j$ as $B[i,j]$; colours are represented by integers between 1 and $c$. Throughout we assume that $2 \leq c \leq n^2$.
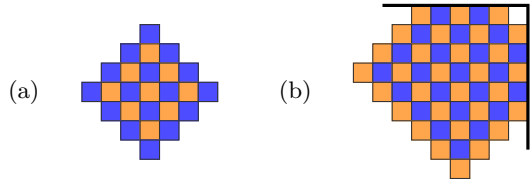
\begin{figure}[t] \centering

        \centering
        \tikzstyle{box}=[thick]
	\begin{tikzpicture}[inner sep=2pt,scale=0.255]
	\scope[shift={(0,0)}]

		\drawrow{1}{0002000}
		\drawrow{2}{0023200}
		\drawrow{3}{0232320}
		\drawrow{4}{2323232}
		\drawrow{5}{0232320}
		\drawrow{6}{0023200}
		\drawrow{7}{0002000}
 		\draw (-3,-3.5) node {(a)};
 		\draw (11,-3.5) node {(b)};

	\endscope

	\scope[shift={(14,3)}]

 		\drawrow{3} {000323230}
		\drawrow{4} {003232323}
		\drawrow{5} {032323232}
		\drawrow{6} {323232323}
		\drawrow{7} {032323232}
		\drawrow{8} {003232323}
		\drawrow{9} {000323230}
		\drawrow{10}{000032300}
		\drawrow{11}{000003000}
	
		\draw[very thick] (1.5,-2) -- ++(7.5,0) -- ++ (0,-7.5);
	\endscope

	\end{tikzpicture} 
        \caption{(a) An alternating 4-diamond and (b) a cropped 6-diamond.\label{fig:kdiamonds}}
\end{figure}

We define a \emph{diamond} to be a diamond-shaped subset of the board (see Figure~\ref{fig:kdiamonds}a). These structures are used throughout the paper. The centre of the diamond is a single tile and the \emph{radius}  is the number of tiles from its centre to its leftmost tile. We write $r$-diamond to denote a diamond of radius $r$. A single tile is therefore a 1-diamond. For $i\in \{1,\dots,r\}$, the $i$th \emph{layer} of an $r$-diamond is the set of tiles at board distance $i-1$ from its centre. We will also consider diamonds which are cropped by intersection with the board edges as in Figure~\ref{fig:kdiamonds}b.


\section{A greedy approach is bad} \label{sec:greed}
An obvious strategy for playing the Flood-It game is the greedy approach. There are two natural greedy algorithms: (1)~we pick the colour that results in the largest gain (number of acquired tiles), or (2)~we choose the colour dominating the perimeter of the currently flooded region. It turns out that both these approaches can be surprisingly bad.

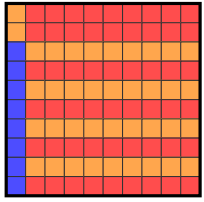
\begin{figure}[t] \centering
        \centering
        \tikzstyle{box}=[thick]
	\begin{tikzpicture}[inner sep=2pt,scale=0.255]
	\scope[shift={(0,0)}]

		\drawrow{1} {3111111111}
		\drawrow{2} {3111111111}
		\drawrow{3} {2333333333}
		\drawrow{4} {2111111111}
		\drawrow{5} {2333333333}
		\drawrow{6} {2111111111}
		\drawrow{7} {2333333333}
		\drawrow{8} {2111111111}
		\drawrow{9} {2333333333}
		\drawrow{10}{2111111111}

		\drawbox{10}{10};
	

	\endscope

%
%
%
%
%
%

	\end{tikzpicture} 
        \caption{A \recdim{10}{10} board where a greedy approach is bad.\label{fig:greedy}}
\end{figure}

To see this, let $B$ be the \recdim{10}{10} board on three colours illustrated in Figure~\ref{fig:greedy}.   The number of moves required to flood $B$ is three.  However, either greedy approach given would first pick the colours appearing on the horizontal lines before finally choosing to flood the left-hand vertical column. In both cases, this requires $10$ moves to fill the board.   It should be clear how this example can easily be extended to arbitrarily large \recdim{n}{n} boards. In general, the greedy algorithm will make $n$ moves, while the optimal algorithm will still make only three.

\section{The complexity of Flood-It} \label{sec:NPC}
\newcommand{\stl}{w}


Let \Coloroid{c} denote the problem which takes as input an \recdim{n}{n} board $B$ of $c$ colours and outputs the minimum number of moves $m(B)$ in a Flood-It game that are required to flood $B$. Similarly, let \ColoroidFree{c} denote the generalised version of \Coloroid{c} in which we are free to flood fill from an arbitrary tile in each move. Although we have seen that a straightforward greedy algorithm fails, it is not too far-fetched to think that a dynamic programming approach would solve these problems efficiently, but the longer one ponders over it, the more inconceivable it seems. To aid frustrated Flood-It enthusiasts, we prove in this section that both \Coloroid{c} and \ColoroidFree{c} are indeed \NPtime-hard, even when the number of colours is as small as three. Interestingly, we will see that \ColoroidFree{2} is in \Ptime.

To show \NPtime-hardness, we reduce from the \emph{shortest common supersequence} problem, denoted \SCS{}, which is defined as follows. The input is a set $S$ of $k$ strings over an alphabet $\Sigma$. A \emph{common supersequence} $s$ of the strings in $S$ is a string such that every string in $S$ is a subsequence of $s$. The output is the length of a shortest common supersequence of the strings in $S$. The decision version of \SCS{} takes an additional integer $\ell$ and outputs yes if the shortest common supersequence has length at most $\ell$, otherwise it outputs no.

Maier~\cite{Mai1978:SCS} showed in 1978 that the decision version of \SCS{} is \NPtime-complete if the alphabet size $|\Sigma|\geq 5$. A couple of years later, R\"aih\"a and Ukkonen~\cite{RU1981:SCS} extended this result to hold for $|\Sigma|\geq 2$. For a long time, various groups of people tried to approximate \SCS{} but no polynomial-time algorithm with guaranteed approximation bound was to be found. It was not until 1995 that Jiang and Li~\cite{JM1995:SCS} settled this open problem by proving that no polynomial-time algorithm can achieve a constant approximation ratio for \SCS{}, unless $\Ptime=\NPtime$. Their result holds for an unbounded alphabet.

The following lemma proves the \NPtime-hardness of both \Coloroid{c} and \ColoroidFree{c} when the number of colours is at least four. The inapproximability of both problems follows immediately from the approximation preserving nature of the reduction. However, in the reduction we present, the number of colours in the \Coloroid{c} instance will be exactly twice the number of alphabet symbols in the \SCS{} instance. For this reason, our inapproximability results only hold when the number of colours is unbounded. We will need a more specialised reduction for the case $c=3$, which is given in Lemma~\ref{lem:NPC-three}.

\begin{lemma}
    \label{lem:NPC-four}
    For $c\geq 4$, \Coloroid{c} and \ColoroidFree{c} are \NPtime-hard (and the decision versions are \NPtime-complete). Further, for an unbounded number of colours $c$, there is no polynomial-time constant factor approximation algorithm, unless $\Ptime=\NPtime$.
\end{lemma}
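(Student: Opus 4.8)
The plan is to reduce from the decision version of \SCS{} (which Maier and R\"aih\"a--Ukkonen showed is \NPtime-complete for $|\Sigma|\ge 2$) to \Coloroid{c}, and to do so in an approximation-preserving way so that Jiang--Li's inapproximability transfers. Given an \SCS{} instance consisting of $k$ strings $S_1,\dots,S_k$ over an alphabet $\Sigma=\{a_1,\dots,a_q\}$, I would build a board on $c=2q$ colours: $q$ ``alphabet'' colours $a_1,\dots,a_q$ and $q$ auxiliary ``background/separator'' colours $b_1,\dots,b_q$. The first coordinate of colours lets a flooding sequence literally spell out a common supersequence, while the auxiliary colours are used to pad and insulate the gadgets so that the $k$ strings do not interfere with one another and so that each flooding move corresponds to appending exactly one symbol to the supersequence.

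Concretely, I would place $k$ ``tracks'', one per string $S_i$, arranged so that the flooded region starting from the top left tile must pass through all of them. Track $i$ encodes $S_i=S_i[1]S_i[2]\cdots S_i[m_i]$ as a path of monochromatic regions coloured $S_i[1],S_i[2],\dots$, with a fresh auxiliary colour inserted between consecutive symbols so that two equal consecutive symbols in a string (or across the two endpoints of neighbouring symbols) cannot be collapsed by a single move --- this is the point of doubling the alphabet. The tracks are laid out (using the diamond structures introduced earlier, or simply long thin corridors) so that a single flooding move of colour $a_j$ simultaneously advances every track that currently has $a_j$ as its next required symbol, and does nothing harmful to the others. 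A sequence of moves $a_{j_1},a_{j_2},\dots,a_{j_t}$ (ignoring the auxiliary moves, which can be interleaved in a fixed, forced way) floods the whole board precisely when the word $a_{j_1}a_{j_2}\cdots a_{j_t}$ is a common supersequence of $S_1,\dots,S_k$; hence $m(B)$ equals the length of a shortest common supersequence plus a fixed additive/multiplicative overhead that is linear in the input and independent of the optimum. For the ``free'' variant one argues that being allowed to flood from an arbitrary tile gives no advantage on this board: any useful move must still effectively extend the region connected to the top left tile, because the auxiliary colours are chosen so that isolated recolourings elsewhere are never profitable. This needs a short exchange/normal-form argument showing an optimal free solution can be converted to a fixed-source solution of no greater length.

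The main obstacle is designing the gadget so that the correspondence is exact in both directions and \emph{robust}: it is easy to ensure that a common supersequence yields a flooding strategy, but the converse --- that an arbitrary flooding strategy (especially a free one, which may recolour far-flung regions) can be ``read off'' as a common supersequence without wasting moves --- requires the layout to forbid shortcuts. The insulating auxiliary colours and the geometry of the tracks must guarantee (i) that distinct tracks never merge into a single region prematurely in a way that would let one move serve two incompatible symbols, and (ii) that no move can collapse more of a single track than one symbol's worth. I would verify these invariants by induction on the move sequence, tracking which prefix of each $S_i$ has been consumed. Finally, since the reduction is polynomial and preserves the objective value up to an additive term linear in the instance size (and in particular independent of the optimum, which itself is at least linear in a natural parameter), a constant-factor approximation for \Coloroid{c} with unbounded $c$ would yield one for \SCS{}, contradicting Jiang--Li unless $\Ptime=\NPtime$; and the decision version is in \NPtime\ since a flooding sequence of length $\le \ell$ is a polynomial-size certificate checkable in polynomial time, giving \NPtime-completeness.
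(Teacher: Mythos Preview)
Your high-level approach for \Coloroid{c} matches the paper's: reduce from \SCS{}, double the alphabet to $2q$ colours (one auxiliary colour $b_j$ per letter $a_j$) so that adjacent layers of each gadget are always differently coloured, and encode each string as a nested diamond whose layers spell the padded string from outside in, all sitting on a background of colour $b_1$. The paper carries this out so that $m(B)=2\ell$ \emph{exactly}, where $\ell$ is the \SCS{} length: the lower bound comes from observing that the subsequence of moves on $\Sigma$ alone must be a common supersequence of $s_1,\dots,s_k$, and by symmetry so must the subsequence on $\Sigma'$. This clean multiplicative factor of two is what makes the reduction approximation-preserving with no further argument. Your weaker claim of an ``additive term linear in the instance size, with the optimum at least linear in a natural parameter'' would need justification, and in fact the exact $2\ell$ correspondence is available for free once the gadget is set up correctly.

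The genuine gap is in your treatment of \ColoroidFree{c}. You propose an exchange/normal-form argument to the effect that ``isolated recolourings elsewhere are never profitable'', but on a board containing only one copy of each diamond this is not clearly true: a free move at the centre of a diamond can recolour an inner layer and merge it with the next layer out, and a sequence of such moves can flood that diamond from the inside in fewer steps than the padded string length would suggest. The paper does \emph{not} attempt to rule this out by an exchange argument; instead it places $2k\stl+1$ identical copies of the whole board side by side. A free move inside one diamond affects only that single copy, while the fixed-source strategy floods all copies simultaneously in $2\ell\le 2k\stl$ moves; a pigeonhole/counting argument then shows that any strategy using moves inside diamonds must use more than $2\ell$ moves in total. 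Your proposed exchange argument would have to handle strategies that flood many diamonds from the inside out in parallel, and it is not evident how to do that without the duplication trick.
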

\begin{proof}
    The proof is split into two parts; first we prove the lemma for \Coloroid{c} in which we flood fill from the top left tile in each move, and in the second part we generalise the proof to \ColoroidFree{c} in which we can flood fill from any tile in each move.

    We reduce from an instance of \SCS{} that contains $k$ strings $s_1,\dots,s_k$ each of length at most $\stl$ over the alphabet $\Sigma$. Suppose that $\Sigma=\{a_1,\dots,a_r\}$ contains $r\geq 2$ letters and let $\Sigma'=\{b_1,\dots,b_r\}$ be an alphabet with $r$ new letters. For $i\in \{1,\dots,k\}$, let $s'_i$ be the string obtained from $s_i$ by inserting the character $b_j$ after each $a_j$ and inserting the character $b_1$ at the very front. For example, from the string $a_3a_1a_4a_3$ we get $b_1a_3b_3a_1b_1a_4b_4a_3b_3$.

    Let $\Sigma\cup \Sigma'$ represent the set of $2r$ colours that we will use to construct a board $B$. First, for $i\in \{1,\dots,k\}$, we define the $|s'_i|$-diamond $D_i$ such that the $j$th layer will contain only one colour which will be the $j$th character from the right-hand end of $s'_i$. Thus, the colour of the outermost layer of $D_i$ is the first character of $s'_i$ (which is $b_1$ for all strings) and the centre of $D_i$ is the last character of $s'_i$. The reason why we intersperse the strings with letters from the auxiliary alphabet $\Sigma'$ is to ensure that no two adjacent layers of a diamond have the same colour. This property is crucial in our proof. Let $B$ be a sufficiently large \recdim{n}{n} board constructed by first colouring the whole board with the colour $b_1$ and then placing the $k$ diamonds $D_i$ on $B$ such that no two diamonds overlap. Since each of the $k$ diamonds has a radius of at most $2\stl+1$, we can be assured that $n$ never has to be greater than $k(4\stl+1)$.

    Suppose that $s$ is a shortest common supersequence of $s_1,\dots,s_k$ and suppose its length is $\ell$. We will now argue that the minimum number of moves to flood $B$ is exactly $2\ell$, first showing that $2\ell$ moves are sufficient. Let $s'$ be the $2\ell$-long string obtained from $s$ by inserting the character $b_j$ after each $a_j$. We make $2\ell$ moves by choosing the colours in the same order as they appear in $s'$. Note that we flood fill from the top left tile in each move. From the construction of the diamonds~$D_i$ it follows that all diamonds, and hence the whole board, are flooded after the last character of $s'$ has been processed.

    It remains to be shown that at least $2\ell$ moves are necessary to flood $B$. Let $s''$ be a string over the alphabet $\Sigma\cup \Sigma'$ that specifies a shortest sequence of moves that would flood the whole board $B$. From the construction of the diamonds $D_i$ it follows that the string obtained from $s''$ by removing every character in $\Sigma'$ is a common supersequence of $s_1,\dots,s_k$ and therefore has length at least $\ell$. By symmetry (replace every $a_j$ with $b_j$ in the strings $s_1,\dots,s_k$), the string obtained from $s''$ by removing every character in $\Sigma$ has length at least $\ell$ as well. Thus, the length of $s''$ is at least $2\ell$.

    Since the decision version of \SCS{} is \NPtime-complete even for a binary alphabet $\Sigma$, it follows that \Coloroid{c} is \NPtime-hard for $c\geq 4$, and the decision version is \NPtime-complete.  As discussed above, observe that the number of colours used is exactly twice the alphabet size of the \SCS{} instance. Therefore the inapproximability result for an unbounded number of colours in the statement of the lemma follows immediately from the approximation preserving nature of the reduction given.

    Now we show how to extend these results to \ColoroidFree{c}. The reduction from \SCS{} is similar to the previously presented reduction. However, instead of constructing only one board $B$, we construct $2k\stl+1$ copies of $B$ and put them together to one large \recdim{n'}{n'} board $B'$. If necessary in order to make $B'$ a square, we add sufficiently many \recdim{n}{n} boards that are filled only with the colour $b_1$. Note that $(2k\stl+1)n$ and hence $(2k\stl+1)k(4\stl+1)$ is a generous upper bound on $n'$.

    From the construction of $B'$ it follows that exactly $2\ell$ moves are required to flood $B'$ if we flood fill from the top left tile in each move; all copies of $B$ will be flooded simultaneously. The question is whether we can do better by flood filling from tiles other than the top left one (or any tile in its connected component). That is, can we do better by picking a tile inside one of the diamonds? We will argue that the answer is no. First note that $2\ell\leq 2kw$. Suppose that we do flood fill from a tile inside some diamond $D$ for some move. This move will clearly not affect any of the other diamonds on $B'$. Suppose that this move would miraculously flood the whole of $D$ in one go so that we can disregard it in the subsequent moves. However, there were originally $2kw+1$ copies of $D$, which is one more than the absolute maximum number of moves required to flood $B'$, hence we can use a recursive argument to conclude that flood filling from a tile inside a diamond will do us no good and would only result in more moves than if we choose to flood fill from the top left tile in each move.
    \qed
\end{proof}

The reduction in the previous proof is approximation preserving, which allowed us to prove that there is no efficient constant factor approximation algorithm. We reduced from an instance of \SCS{} by doubling the alphabet size, resulting in instances of \Coloroid{c} and \ColoroidFree{c} with $c\geq 4$ colours. To establish \NPtime-hardness for $c=3$ colours, we need to consider a different reduction. We do this in the lemma below by reducing from the decision version of \SCS{} over a binary alphabet to the decision versions of \Coloroid{3} and \ColoroidFree{3}. This reduction is not approximation preserving as in the previous proof; the number of moves required to flood the board in the reduced instance of \Coloroid{3} (or \ColoroidFree{3}) does not correspond in a straightforward way to the length of shortest common supersequence in the \SCS{} instance we reduce from.

\begin{lemma}
    \label{lem:NPC-three}
    \Coloroid{3} and \ColoroidFree{3} are \NPtime-hard (and the decision versions are \NPtime-complete).
\end{lemma}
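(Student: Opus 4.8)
The plan is to imitate the \SCS{}-reduction behind Lemma~\ref{lem:NPC-four}, but to reduce this time from the decision version of \SCS{} over the \emph{binary} alphabet $\{1,2\}$ (which is \NPtime-complete, as recalled above), working with only a single ``spare'' colour rather than a whole shadow alphabet. Given strings $s_1,\dots,s_k$ over $\{1,2\}$ and a target $\ell$, I would build an \recdim{n}{n} board $B$ over the colours $\{1,2,3\}$: colour the whole board with colour~$3$, and place $k$ pairwise disjoint diamonds $D_1,\dots,D_k$, separated by background, where $D_i$ spells out (from its outermost layer inwards) an encoding $\pi(s_i)$ of $s_i$ in which colour~$3$ is inserted as a separator wherever two consecutive layers would otherwise receive the same colour. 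With the naive choice $\pi(c_1\cdots c_m)=c_1\,3\,c_2\,3\cdots c_m\,3$ one checks, exactly as in the proof of Lemma~\ref{lem:NPC-four}, that a word over $\{1,2,3\}$ is a valid flooding sequence for $B$ if and only if it is a common supersequence of $\pi(s_1),\dots,\pi(s_k)$; hence $m(B)$ is the shortest-common-supersequence length of $\{\pi(s_i)\}$.

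The complication --- the reason this reduction cannot be made approximation preserving the way Lemma~\ref{lem:NPC-four} was --- is that a single separator colour can be reused between different diamonds (the two distinct shadow letters in Lemma~\ref{lem:NPC-four} could not be), so the shortest common supersequence of $\{\pi(s_i)\}$ is \emph{not} simply $2\,\mathrm{SCS}(s_1,\dots,s_k)$, and the move count does not determine the \SCS{} length in a transparent way. To make the two \emph{decision} thresholds coincide I would harden the construction in one of two ways: either (i) preprocess the \SCS{} instance, replacing each symbol by a short fixed block over $\{1,2\}$ so that every string has no two equal consecutive symbols --- then $\pi$ needs no separators, colour~$3$ is used only as the background, the flooding sequence lives over $\{1,2\}$, and one shows that this block-encoding preserves the shortest-common-supersequence threshold; or (ii) keep $\pi$ as above but adjoin a polynomially bounded family of auxiliary diamonds whose sole role is to force colour-$3$ moves to occur only at prescribed moments, so that every flooding sequence must behave like $\pi$ applied to a genuine common supersequence of $s_1,\dots,s_k$. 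In either case the Flood-It target $\ell'$ is an explicit polynomial-time computable function of $\ell$ and the input strings, and the goal is the equivalence $\mathrm{SCS}(s_1,\dots,s_k)\le\ell\iff m(B)\le\ell'$.

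The forward implication is the easy half: a common supersequence $s$ of the $s_i$ with $|s|\le\ell$ gives the flooding sequence $\pi(s)$ (together with whatever extra moves the auxiliary diamonds of approach (ii) demand), of length $\le\ell'$, just as in the ``sufficiency'' part of the proof of Lemma~\ref{lem:NPC-four}. The backward implication is the crux: from an optimal flooding sequence for $B$ one deletes the colour-$3$ moves to obtain a word over $\{1,2\}$, and must argue --- using the block structure of (i), or the auxiliary diamonds of (ii), to control how much colour-$3$ reuse could have taken place --- that this word is a common supersequence of $s_1,\dots,s_k$ of length at most $\ell$. Bounding that colour-$3$ reuse is exactly the main obstacle, and is precisely what the extra structure in the construction exists to handle.

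Finally, the result for \ColoroidFree{3} follows by the device already used in the proof of Lemma~\ref{lem:NPC-four}: take enough pairwise disjoint copies of $B$ --- more than the largest conceivable number of moves --- and, if necessary, pad to a square with monochromatic colour-$3$ blocks, so that a flood fill started inside any single diamond affects only that diamond and, by the same recursive counting argument, can never beat the strategy that always floods the region containing the top-left tile.
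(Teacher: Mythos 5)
You have correctly set up the reduction (binary \SCS{}, diamonds with colour~3 interleaved as a separator, background colour~3) and, more importantly, you have correctly isolated the obstacle: because the single separator colour~3 can be shared between diamonds, $\mathrm{SCS}(\pi(s_1),\dots,\pi(s_k))$ is not a transparent function of $\mathrm{SCS}(s_1,\dots,s_k)$ (e.g.\ for $s_1=1$, $s_2=2$ one gets $\pi(s_1)=13$, $\pi(s_2)=23$ with common supersequence $123$ of length $3$, not $4$). But the proof stops exactly at that obstacle. Your option~(ii) is the direction the paper actually takes, yet the auxiliary gadget is left entirely unspecified, and building it is the substance of the proof: the paper adjoins a rectangular gadget $R$ made of $2\ell+3$ nested concentric ``arches'' (colours $1,2,3,2,3,2,\dots$ with a carefully placed colour-$1$ tile in the even arches) whose effect is that \emph{any} flooding sequence must consist of $\ell$ moves from $\{1,2\}$ interspersed with $\ell-1$ threes, followed by the moves $3,2,1$ --- i.e.\ it pins the minimum at exactly $2\ell+2$ moves and forces every optimal sequence to have the interleaved form, so that the $\{1,2\}$-subsequence of the first $2\ell$ moves is an $\ell$-long common supersequence. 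Without a concrete gadget achieving this ``clocking'' of the colour-$3$ moves, the backward implication (your stated crux) has no proof.

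Your option~(i) is moreover almost certainly a dead end and should not be offered as an alternative: over the alphabet $\{1,2\}$ the only strings with no two equal consecutive symbols are the alternating ones, so any block encoding with the required closure property must map both symbols to alternating blocks with the same first and last letter, differing only in length. Such an encoding is not even injective on two-letter strings (with blocks $u=12$, $v=1212$ both $12\mapsto uv$ and $21\mapsto vu$ give $121212$), and simple examples show it does not preserve the \SCS{} threshold in any fixed way (e.g.\ $\mathrm{SCS}(12,21)=3$ while the encodings coincide and have \SCS{} length $6$). The \ColoroidFree{3} part of your argument (replicating the construction more times than the move budget) is fine in spirit and matches the paper's device, but it too depends on the missing gadget.
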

\begin{proof}
    We reduce from an instance of the decision version of \SCS{} on $k$ strings $s_1,\dots,s_k$ of length at most $\stl$ over the binary alphabet $\{1,2\}$ and an integer $\ell$. The yes/no question is whether there exists a common supersequence of length at most $\ell$.

    For $i\in \{1,\dots,k\}$, let $s'_i$ be the string obtained from $s_i$ by inserting the new character~3 at the front of $s_i$ and after each character of $s_i$. Let the set $\{1,2,3\}$ represent the colours that we will use to construct a board $B$. First, for each of the $k$ strings $s'_i$ we define the diamond $D_i$ exactly as in the proof of Lemma~\ref{lem:NPC-four} (see Figure~\ref{fig:NPC}a). We define $R$ to be the following rectangular area of the board of width $4\ell+5$ and height $2\ell+3$. Let $x$ be the middle tile at the bottom of $R$. Around $x$ we have layers of concentric half rectangles (see Figure~\ref{fig:NPC}b). We refer to these layers as \emph{arches}, with the first arch being $x$ itself. As demonstrated in the figure, the first arch has the colour~1 and the second arch has the colour~2. All the remaining odd arches have the colour~3, and all the remaining even arches are coloured~2 everywhere except for the tile above $x$ which has the colour~1. As described in detail below, the purpose of these arches is to control which minimal sequences of moves would flood $B$.

\begin{figure}[t]
\centering
	\tikzstyle{box}=[thick]
	\begin{tikzpicture}[inner sep=2pt,scale=0.255]

	\scope[shift={(0,0)}]
		\draw (2,-0.5) node {(a)};
 		\drawrow{0} {00000000200000000}	
 		\drawrow{1} {00000002120000000}	
 		\drawrow{2} {00000021212000000}
 		\drawrow{3} {00000212321200000}
 		\drawrow{4} {00002123232120000}
 		\drawrow{5} {00021232123212000}
		\drawrow{6} {00212321212321200}
		\drawrow{7} {02123212121232120}
		\drawrow{8} {21232121212123212}
		\drawrow{9} {02123212121232120}
		\drawrow{10}{00212321212321200}
		\drawrow{11}{00021232123212000}
		\drawrow{12}{00002123232120000}
		\drawrow{13}{00000212321200000}
 		\drawrow{14}{00000021212000000}	
 		\drawrow{15}{00000002120000000}	
 		\drawrow{16}{00000000200000000}	

	\endscope

	\scope[shift={(20,-5)}]
		\draw (10.5,1.5) node {(b)};
		\drawrow{1} {222222222222222222222}
		\drawrow{2} {211111111131111111112}
		\drawrow{3} {212222222222222222212}
		\drawrow{4} {212111111131111111212}
		\drawrow{5} {212122222222222221212}
		\drawrow{6} {212121111131111121212}
		\drawrow{7} {212121222222222121212}
		\drawrow{8} {212121211131112121212}
		\drawrow{9} {212121212222212121212}
		\drawrow{10}{212121212111212121212}
		\drawrow{11}{212121212131212121212}

	\fill[c3] (-5,4) rectangle +(1,1);
	\draw[] (-5,4) rectangle +(1,1);
	\draw (0.5,4.5) node {= Colour 1};

	\fill[c1] (5,4) rectangle +(1,1);
	\draw[] (5,4) rectangle +(1,1);
	\draw (10.5,4.5) node {= Colour 2};

	\fill[c2] (15,4) rectangle +(1,1);
	\draw[] (15,4) rectangle +(1,1);
	\draw (20.5,4.5) node {= Colour 3};

	\endscope

	\end{tikzpicture}
\caption{ An example of (a) a diamond and (b) a rectangle constructed in the proof of Lemma~\ref{lem:NPC-three}.}
\label{fig:NPC}
\end{figure}
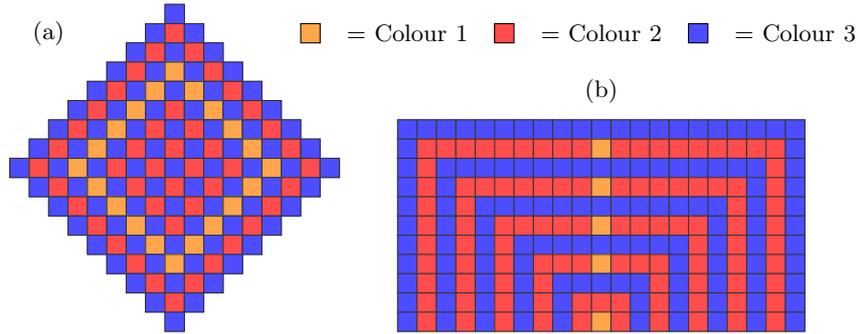

    Let $B$ be a sufficiently large \recdim{n}{n} board constructed as follows. First colour the whole board with the colour~3. Then, at the bottom of $B$ starting from the left, place $2\ell+3$ copies of $R$ one after another without any overlaps. Finally place the $k$ diamonds $D_i$ on $B$ such that no two diamonds overlap and no diamond overlaps any copy of $R$. Figure~\ref{fig:NPC2} illustrates a board $B$ with $\ell=2$ and $k=10$. Since a diamond has a radius of at most $2\stl+1$ and $\ell\leq k\stl$, $k(4\stl+1)+(2k\stl+3)(4k\stl+5)$ is an upper bound on $n$.

    The reason why we place copies of $R$ on the board $B$ is to make sure that at least $2\ell+2$ moves are required to flood $B$, even in the absence of diamonds. To see this, suppose first that we flood fill from the top left square in each move. From the definition of the arches of $R$, disregarding the diamonds on $B$, a minimal sequence of moves will consist of $\ell$ 1s or 2s interspersed with a total of $\ell-1$ 3s, followed by the three moves 3, 2 and 1, respectively. Note that only one copy of $R$ on $B$ would be enough to achieve this. However, having several copies of $R$ on $B$ does not affect the minimum number of moves as all copies will get flooded simultaneously. The idea with the $2\ell+3$ copies of $R$ is to make sure that at least $2\ell+2$ moves are required to flood $B$ even when we are allowed to choose which tile to flood fill from in each move. To see this, suppose that we choose to flood fill from a tile inside one of the copies of $R$. Since there are $2\ell+3$ copies, similar reasoning to the end of the proof of Lemma~\ref{lem:NPC-four} tells us that we will do worse than $2\ell+2$ moves.


    We will now argue that the number of moves required to flood $B$ is $2\ell+2$ if and only if there is a common supersequence of $s_1,\dots,s_k$ of length at most $\ell$. We choose to flood fill from the top left tile in each move.

    Suppose first that there is a common supersequence $s$ of length $\ell'\leq \ell$. Let $s'$ be the string $s$ followed by $\ell-\ell'$~1s. Let $s''$ be the $(2\ell+2)$-long string obtained from $s'$ by inserting a~3 after each character of $s'$ and adding the two additional characters 2 and 1 to the end. We make $2\ell+2$ moves by choosing the colours in the same order as they appear in $s''$. Note that all diamonds are flooded after $2\ell'$ moves, and by the last move we have also flooded every copy of $R$, and hence the whole board $B$.

\begin{figure}[t]
\centering
	\tikzstyle{box}=[thick]
	\begin{tikzpicture}[inner sep=2pt,scale=0.255]

	\scope[shift={(45,-2)}]

		\draw[] (0,0) rectangle (14,-14);

		\fill[color=gray!50] (0,-14) rectangle +(2,1);
		\fill[color=gray!50] (2,-14) rectangle +(2,1);
		\fill[color=gray!50] (4,-14) rectangle +(2,1);
		\fill[color=gray!50] (6,-14) rectangle +(2,1);
		\fill[color=gray!50] (8,-14) rectangle +(2,1);
		\fill[color=gray!50] (10,-14) rectangle +(2,1);
		\fill[color=gray!50] (12,-14) rectangle +(2,1);

		\draw[] (0,-14) rectangle +(2,1);
		\draw[] (2,-14) rectangle +(2,1);
		\draw[] (4,-14) rectangle +(2,1);
		\draw[] (6,-14) rectangle +(2,1);
		\draw[](8,-14) rectangle +(2,1);
		\draw[](10,-14) rectangle +(2,1);
		\draw[] (12,-14) rectangle +(2,1);

		\fill[color=gray!50] (4,-3) -- ++(-1,-1) -- ++(-1,1) -- ++(1,1) -- ++(1,-1);
		\fill[color=gray!50] (4,-7) -- ++(-1,-1) -- ++(-1,1) -- ++(1,1) -- ++(1,-1);
		\fill[color=gray!50] (6,-5) -- ++(-1,-1) -- ++(-1,1) -- ++(1,1) -- ++(1,-1);
		\fill[color=gray!50] (8,-7) -- ++(-1,-1) -- ++(-1,1) -- ++(1,1) -- ++(1,-1);
		\fill[color=gray!50] (10,-9) -- ++(-1,-1) -- ++(-1,1) -- ++(1,1) -- ++(1,-1);
		\fill[color=gray!50] (8,-11) -- ++(-1,-1) -- ++(-1,1) -- ++(1,1) -- ++(1,-1);
		\fill[color=gray!50] (8,-3) -- ++(-1,-1) -- ++(-1,1) -- ++(1,1) -- ++(1,-1);
		\fill[color=gray!50] (10,-5) -- ++(-1,-1) -- ++(-1,1) -- ++(1,1) -- ++(1,-1);
		\fill[color=gray!50] (12,-7) -- ++(-1,-1) -- ++(-1,1) -- ++(1,1) -- ++(1,-1);
		\fill[color=gray!50] (12,-3) -- ++(-1,-1) -- ++(-1,1) -- ++(1,1) -- ++(1,-1);

		\draw[] (4,-3) -- ++(-1,-1) -- ++(-1,1) -- ++(1,1) -- ++(1,-1);
		\draw[] (4,-7) -- ++(-1,-1) -- ++(-1,1) -- ++(1,1) -- ++(1,-1);
		\draw[] (6,-5) -- ++(-1,-1) -- ++(-1,1) -- ++(1,1) -- ++(1,-1);
		\draw[] (8,-7) -- ++(-1,-1) -- ++(-1,1) -- ++(1,1) -- ++(1,-1);
		\draw[] (10,-9) -- ++(-1,-1) -- ++(-1,1) -- ++(1,1) -- ++(1,-1);
		\draw[] (8,-11) -- ++(-1,-1) -- ++(-1,1) -- ++(1,1) -- ++(1,-1);
		\draw[] (8,-3) -- ++(-1,-1) -- ++(-1,1) -- ++(1,1) -- ++(1,-1);
		\draw[] (10,-5) -- ++(-1,-1) -- ++(-1,1) -- ++(1,1) -- ++(1,-1);
		\draw[] (12,-7) -- ++(-1,-1) -- ++(-1,1) -- ++(1,1) -- ++(1,-1);
		\draw[] (12,-3) -- ++(-1,-1) -- ++(-1,1) -- ++(1,1) -- ++(1,-1);
	\endscope

	\end{tikzpicture}
\caption{A board constructed in the proof of Lemma~\ref{lem:NPC-three}.}
\label{fig:NPC2}
\end{figure}
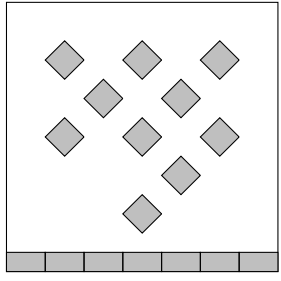

    Suppose second that $B$ can be flooded in $2\ell+2$ moves. The centre of each diamond has the colour~3 and therefore the first $2\ell$ moves flood the diamonds. The subsequence of these first $2\ell$ moves induced by the colours 1 and 2 is an $\ell$-long common supersequence of $s_1,\dots,s_k$.
    \qed
\end{proof}

We can now summarise Lemmas~\ref{lem:NPC-four} and~\ref{lem:NPC-three} in the following theorem.

\begin{theorem}
    \label{thm:NPC}
    For $c\geq 3$, \Coloroid{c} and \ColoroidFree{c} are \NPtime-hard (and the decision versions are \NPtime-complete). Further, for an unbounded number of colours $c$, there is no polynomial-time constant factor approximation algorithm, unless $\Ptime=\NPtime$.
\end{theorem}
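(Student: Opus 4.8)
The plan is essentially bookkeeping: the theorem is just the conjunction of the two lemmas already proved, so I would assemble them rather than reprove anything. First I would invoke Lemma~\ref{lem:NPC-four}, which already delivers \NPtime-hardness of \Coloroid{c} and \ColoroidFree{c} for every $c \geq 4$, together with the statement that no polynomial-time constant-factor approximation algorithm exists for unbounded $c$ unless $\Ptime = \NPtime$. Then I would invoke Lemma~\ref{lem:NPC-three} to cover the one remaining value $c = 3$. Since every integer $c \geq 3$ is either $3$ or at least $4$, the union of the two cases gives \NPtime-hardness of both problems for all $c \geq 3$, and the inapproximability clause carries over unchanged (an instance family with unbounded $c \geq 4$ is in particular one with unbounded $c \geq 3$). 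I would note in passing that the hardness half could alternatively be derived from Lemma~\ref{lem:NPC-three} alone, because a board coloured with three colours is also a board coloured with at most $c$ colours for every $c \geq 3$, so $3$-colour hardness immediately propagates upward — but since we want the sharp inapproximability statement we need Lemma~\ref{lem:NPC-four} regardless.

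The only point that is not literally contained in the two lemmas is membership of the decision versions in \NPtime. For this I would first record the crude bound $m(B) \leq n^2$ for both the standard and the free variant: while at least two regions remain, there is always a move (flood some region to the colour of one of its neighbours, and in the standard case the top-left region always has a neighbour since the board is connected) that strictly decreases the number of regions, and an \recdim{n}{n} board has at most $n^2$ regions, so the monochromatic state is reached within $n^2 - 1$ moves. Consequently, in the decision problem with target $\ell$, if $\ell \geq n^2$ the answer is trivially yes, and otherwise $\ell < n^2$ is polynomially bounded; a certificate is then a list of at most $\ell$ moves — each move being a colour, plus the chosen tile in the free variant — and this list can be simulated on the board and checked in polynomial time. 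Hardness plus this membership argument yields \NPtime-completeness of the decision versions, completing the theorem.

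I do not expect any real obstacle: all of the combinatorial work lives in Lemmas~\ref{lem:NPC-four} and~\ref{lem:NPC-three}, and the residual verification above is routine. The one place to be slightly careful is making sure the polynomial move bound is stated for the \emph{free} variant as well as the standard one, so that both decision problems are seen to lie in \NPtime; the region-count argument above handles both at once.
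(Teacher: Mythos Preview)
Your proposal is correct and matches the paper's approach exactly: the paper simply states that Theorem~\ref{thm:NPC} summarises Lemmas~\ref{lem:NPC-four} and~\ref{lem:NPC-three}, with no further proof. Your explicit \NPtime-membership argument via the $n^2$ move bound is a welcome addition that the paper leaves implicit.
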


For two colours, \Coloroid{2} is trivially in \Ptime, but it is not that obvious what the complexity of \ColoroidFree{2} is. The next theorem settles this question, by showing that an optimal strategy for any instance of \ColoroidFree{2} consists of flooding from the same tile in each move.

\begin{theorem}\label{thm:NPC-freetwo}
        \ColoroidFree{2} is in \Ptime.
\end{theorem}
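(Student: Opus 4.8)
The plan is to show that $m(B)$ equals the radius $\rho(G)$ of the \emph{region adjacency graph} $G$ of $B$ — the graph whose vertices are the monochromatic regions of $B$, with an edge between regions that share a border. Since $B$ uses only two colours, adjacent regions necessarily have different colours, so $G$ is connected and bipartite, and, crucially, a single move (in either \Floodit{} or \ColoroidF{}) applied to the region $R$ containing the chosen tile has a clean effect on $G$: it flips the colour of $R$, which then merges with all of $N(R)$; the regions at graph-distance $2$ from $R$ keep $R$'s old colour and are therefore pairwise non-adjacent, so no further merging occurs. Thus one move replaces $G$ by the contraction $G/N[R]$ of the closed neighbourhood $N[R]$ to a single vertex, and the board is flooded precisely when the current graph has become a single vertex.

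For the upper bound I would fix a tile $t$ in a centre region $R_0$ of $G$ (so $\mathrm{ecc}_G(R_0)=\rho(G)$) and flood from $t$ in every move. Using the identity $d_{G/N[R]}(v,[x]) = d_G(x,R)-1$ for the contracted vertex $v$, the eccentricity of the growing blob containing $t$ drops by exactly $1$ per move, so after $\rho(G)$ moves the whole board is a single region; hence $m(B)\le\rho(G)$ and this value is attained by a single-tile strategy.

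The heart of the matter is the matching lower bound $m(B)\ge\rho(G)$, which I would obtain by induction on the number of regions from the following key lemma: \emph{for every connected graph $H$ and every vertex $w$, $\rho(H/N[w])\ge\rho(H)-1$}. Granting this, if $\sigma$ is an optimal sequence for $B$ whose first move contracts $N[R_1]$, then the tail of $\sigma$ floods the board $B_1$ with region graph $G_1=G/N[R_1]$ (which has strictly fewer regions), so $m(B)-1\ge m(B_1)=\rho(G_1)\ge\rho(G)-1$ by induction and the lemma. To prove the lemma, let $v$ be the contracted vertex and suppose some vertex $z$ of $H/N[w]$ has eccentricity at most $\rho(H)-2$. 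I would use the distance identities $d_{H/N[w]}(v,[x]) = d_H(x,w)-1$ and $d_{H/N[w]}([x],[y]) = \min\bigl(g(x,y),\,d_H(x,w)+d_H(y,w)-2\bigr)$, where $g(x,y)$ is the shortest $x$--$y$ distance in $H$ that avoids $N[w]$. If $z=v$, these force $\mathrm{ecc}_H(w)\le\rho(H)-1$, a contradiction. Otherwise $z=[u]$ for an original vertex $u\notin N[w]$; choosing a vertex $x$ with $d_H(u,x)\ge\rho(H)$ and feeding it through the second identity forces $d_H(u,w)+d_H(x,w)=\rho(H)$, and a short calculation then shows that the neighbour $z^+$ of $u$ on a shortest $u$--$w$ path satisfies $\mathrm{ecc}_H(z^+)\le\rho(H)-1$ — again contradicting the definition of $\rho(H)$.

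Combining the bounds yields $m(B)=\rho(G)=\min_t\bigl(\text{number of moves when flooding always from }t\bigr)$, which is exactly the claim that an optimal \ColoroidFree{2} strategy may be assumed to flood from a single fixed tile. The algorithm follows at once: construct $G$ by scanning $B$, compute the eccentricity of every region by breadth-first search, and output the minimum; this runs in polynomial time, so \ColoroidFree{2} is in \Ptime. I expect the second case of the key lemma — identifying the witness $z^+$ and checking $\mathrm{ecc}_H(z^+)\le\rho(H)-1$ across all sub-cases of the two distance identities — to be the only genuinely delicate step; the rest is routine bookkeeping on region graphs.
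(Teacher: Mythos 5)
Your argument is correct, but it takes a genuinely different route from the paper's. The paper proves only the qualitative statement that some optimal strategy floods from a single fixed tile, via an exchange argument on move sequences: given an optimal sequence using two source tiles $t_1,t_2$ whose flooded components eventually merge, it concatenates a path realising the moves from $t_1$ with a reversed path realising the moves from $t_2$, exhibits a single tile $t_3$ on the combined path from which the same area is flooded in the same number of moves, and then inducts down to one source tile; the algorithm is then to try every starting tile (the two-colour single-source game being forced). You instead prove the sharper quantitative fact that $m(B)$ equals the radius $\rho(G)$ of the region adjacency graph: the observation that a move from a region $R$ replaces $G$ by $G/N[R]$ is exactly right for two colours, the upper bound via flooding from a centre is the same forced single-source strategy the paper uses, and the lower bound rests on your contraction lemma $\rho(H/N[w])\ge\rho(H)-1$. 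I checked that lemma: with the distance identities $d_{H/N[w]}(v,[x])=d_H(x,w)-1$ and $d_{H/N[w]}([x],[y])=\min\bigl(g(x,y),\,d_H(x,w)+d_H(y,w)-2\bigr)$, the neighbour $z^+$ of $u$ on a shortest $u$--$w$ path satisfies $\mathrm{ecc}_H(z^+)\le \mathrm{ecc}_{H/N[w]}([u])+1$ in every sub-case (and $\mathrm{ecc}_H(w)\le\mathrm{ecc}_{H/N[w]}(v)+1$ when $z=v$), so the delicate step you flag does go through; one should also note that optimal sequences contain no null moves, so every move really is a contraction of some $N[R]$. What your approach buys is an exact closed-form characterisation of $m(B)$ and a proof that single-source play is optimal among \emph{all} strategies by a clean monotonicity invariant, rather than by sequence surgery; what the paper's approach buys is independence from any structural formula for $m(B)$, which is why its reduction-to-one-tile argument is phrased without reference to the region graph. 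Both yield the same $O(\text{poly}(n))$ algorithm (eccentricities of all regions by BFS).
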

\begin{proof}
    We first consider the case where we are allowed to flood fill from exactly two distinct tiles of the board. At the end of the proof we consider the case where flooding from any tile is allowed.

    Suppose there exists a shortest sequence of moves $S$ that floods the board from only two tiles $t_1$ and $t_2$. Suppose also that $t_1$ and $t_2$ belong to different connected components during the first $m_1+m_2$ moves but become connected in the $(m_1+m_2+1)$th move, where $m_1$ is the number of flood filling operations from $t_1$ and $m_2$ is the number of flood filling operations from $t_2$. Suppose without loss of generality that in move $m_1+m_2+1$, we flood fill from $t_1$. Let $t'_1$ and $t'_2$ be two adjacent tiles such that after $m_1+m_2$ moves, $t_1$ and $t'_1$ belong to the same monochromatic region, and $t_2$ and $t'_2$ belong to the same monochromatic region. Let $P_1$ be a simple path from $t_1$ to $t'_1$ in the board with the monochromatic connected components $\alpha_0,\dots,\alpha_{m_1}$, such that the $i$th flood filling move from $t_1$ merges $\alpha_i$ with the monochromatic region that contains $t_1$. Thus, $t_1\in \alpha_0$ and the whole path $P_1$ is monochromatic after $m_1$ flood filling operations from $t_1$. We define a path $P_2$ from $t_2$ to $t'_2$ similarly. Let $\beta_0,\dots,\beta_{m_2}$ be the monochromatic connected components of $P_2$. Figure~\ref{fig:NPC-2freeA} illustrates the two paths $P_1$ and $P_2$.
    \begin{figure}[t]
        \centering
	\tikzstyle{box}=[very thick]	
\begin{tikzpicture}[inner sep=2pt,scale=0.255]

	\scope[shift={(0,0)}]

		\draw (9,2) node {(a)};
		\drawrow{1} {222222222222222222}
		\drawrow{2} {222332222222332222}
		\drawrow{3} {233223322222333222}
		\drawrow{4} {322332233222222222}
		\drawrow{5} {323223323322222222}
		\drawrow{6} {233232322322222222}
		\drawrow{7} {232323232322332222}
		\drawrow{8} {233232323223333322}
		\drawrow{9} {323223232223223322}
		\drawrow{10}{232332322332322332}
		\drawrow{11}{232223233223233232}
		\drawrow{12}{223333322232323232}
		\drawrow{13}{222222233223232332}
		\drawrow{14}{222222223322332332}
		\drawrow{15}{223322222332222322}
		\drawrow{16}{233332222333333322}
		\drawrow{17}{223222222233333222}
		\drawrow{18}{222222222222222222}

		\drawbox{18}{18};
		\draw[thick,solid] (5,-6.5) -- ++(0.5,0) -- ++(0,-5) --++ (0.5,0);
		\draw[thick,dashed] (8,-11.5) -- ++(4,0);
		\draw[box] (4,-6) rectangle +(1,-1);
		\draw[box] (6,-11) rectangle +(1,-1);

		\draw[box] (7,-11) rectangle +(1,-1);
		\draw[box] (12,-11) rectangle +(1,-1);
	\endscope

\scope[shift={(22,0)}]

		\draw (9,2) node {(b)};
		\drawrow{1} {222222222222222222}
		\drawrow{2} {222332222222332222}
		\drawrow{3} {233333322222333222}
		\drawrow{4} {333333333222222222}
		\drawrow{5} {333333333322222222}
		\drawrow{6} {233333333322222222}
		\drawrow{7} {233333333322332222}
		\drawrow{8} {233333333223333322}
		\drawrow{9} {333333332223223322}
		\drawrow{10}{233333322332222332}
		\drawrow{11}{233333233222222232}
		\drawrow{12}{223333322222222232}
		\drawrow{13}{222222233222222332}
		\drawrow{14}{222222223322222332}
		\drawrow{15}{223322222332222322}
		\drawrow{16}{233332222333333322}
		\drawrow{17}{223222222233333222}
		\drawrow{18}{222222222222222222}

		\drawbox{18}{18};
		\draw[thick,solid] (5,-6.5) -- ++(0.5,0) -- ++(0,-5) --++ (0.5,0);
		\draw[thick,dashed] (8,-11.5) -- ++(4,0);
		\draw[box] (4,-6) rectangle +(1,-1);
		\draw[box] (6,-11) rectangle +(1,-1);

		\draw[box] (7,-11) rectangle +(1,-1);
		\draw[box] (12,-11) rectangle +(1,-1);
	\endscope
	\end{tikzpicture} 
        \caption{ A board (a) before and (b) after $m_1+m_2$ moves as discussed in the proof of Theorem~\ref{thm:NPC-freetwo}. The solid and dashed paths give $P_1$ and $P_2$ respectively. In left-to-right order, the emphasised tiles are $t_2,t_2',t_1'$ and $t_1$.}
    \label{fig:NPC-2freeA}
    \end{figure}
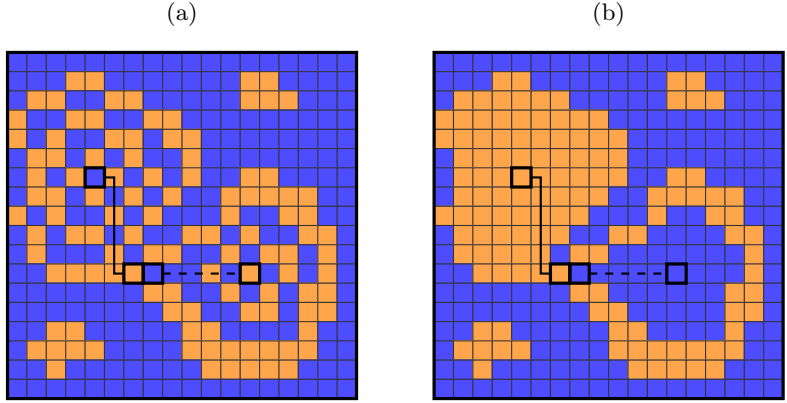

    We now show that the area flooded after the first $m_1+m_2+1$ moves of $S$ can be flooded with $m_1+m_2+1$ flood filling moves from one single tile $t_3$. Let $P_3$ be the path $P_1$ concatenated with a reversed copy of $P_2$. Thus, the monochromatic connected components $\gamma_i$ of $P_3$ are $\gamma_0=\alpha_0,\, \gamma_1=\alpha_1,\dots, \gamma_{m_1}=\alpha_{m_1},\, \gamma_{m_1+1}=\beta_{m_2},\, \gamma_{m_1+2}=\beta_{m_2-1}, \dots, \gamma_{m_1+m_2+1}=\beta_0$. Let $t_3$ be a tile in $\gamma_{m_2}$ and consider a series of flood filling moves from this tile: after the first $m_2$ moves, $t_1$ and $t_3$ are connected, and after the first $m_1+1$ moves, $t_2$ and $t_3$ are connected. Once a tile $t$ is in the same monochromatic component as $t_3$, flooding from $t_3$ is equivalent to flooding from $t$. Thus, after a total of $m_1+m_2+1$ flood filling moves from $t_3$, we have effectively performed $m_1+1$ flood filling moves from $t_1$ and $m_2$ flood filling moves from $t_2$. This is exactly what the first $m_1+m_2+1$ moves of $S$ do. Hence we can replace the moves in $S$ by flooding from a single tile $t_3$.

    Finally we deal with the case where we are allowed to flood fill from any tile. Consider a shortest sequence $S$ of moves that flood the board and suppose that we flood fill from the tiles $t_1,\dots,t_r$, for $r>2$. Suppose without loss of generality that the first merge of any of these tiles is when we flood fill from $t_1$, which connects $t_1$ with $t_2,\dots,t_{r'}$, where $2\leq r'\leq r$. Let $m_i$ be the number of flood filling operations that have taken place from $t_i$ before this merge. The following sequence $S'$ of moves will flood the board in at most $|S|$ moves but flood fills from only $r-1$ tiles. For $i=3,\dots,r$, first perform $m_i$ flood filling operations from $t_i$. Instead of flooding from $t_1$ and $t_2$ separately, we use the result above and flood fill from a different tile $t'$. Thus, by the next $m_1+m_2+1$ moves we have connected $t_1,\dots,t_{r'}$. The subsequent moves of $S'$ follow those of $S$, where any move at $t_1$ or $t_2$ is replaced with a move at $t'$. Inductively we reduce the number of tiles to flood fill from to a single tile. The conclusion is that we can solve \ColoroidFree{2} by attempting to flood the entire board from each tile of the board in turn, which requires only polynomial time.
    \qed
\end{proof}

\section{The complexity of constant height boards} \label{sec:shape}

So far we have analysed the complexity of $\Floodit$ on square shaped \recdim{n}{n} boards. A natural question to ask is: what is the complexity of \Coloroid{c} on an \recdim{h}{n} board, where the height $h$ is a fixed constant? We denote this problem by \Coloroid{(c,h)} and the `free' variant by \ColoroidFree{(c,h)}, analogously.

\Coloroid{(c,1)} is trivially in \Ptime, and Fleischer and Woeginger have shown (personal communication) that \ColoroidFree{(c,1)} is also in \Ptime. We will show that of \Coloroid{(c,2)} on a \recdim{2}{n} board remains in \Ptime. However, the complexity of \ColoroidFree{(c,2)} remains unresolved. Before stating this result we will prove in Theorem~\ref{thm:hNPC} that when the number of colours is unbounded and $h\geq 3$ then both \Coloroid{(c,h)} and \ColoroidFree{(c,h)} are \NPtime-hard.


For the \Coloroid{c} problem on a square \recdim{n}{n} board with $c \geq 4$ we gave a reduction from the shortest common supersequence problem (\SCS{}) which embedded a number of diamond structures into a board filled with a single background colour. Each diamond represented one of the strings in the \SCS{} instance. The problem with this reduction on an \recdim{h}{n} board is that a string of length $\ell$ was represented by a diamond with height $4\ell-1$. This is not possible if $h<4\ell-1$. However, Timkovskii proved~\cite{Timkovskii:1989} that the \SCS{} problem remains \NPtime-hard even when the length of the strings is constrained to be at most $2$, and the alphabet size is unbounded. Inspection of the proof of Lemma~\ref{lem:NPC-four} shows that \Coloroid{(c,h)} is \NPtime-hard (and the decision version \NPtime-complete) when $h \geq 8$ and the number of colours is unbounded. Naively, it would appear that $h=7$ suffices in the proof of Lemma~\ref{lem:NPC-four} as it allows enough height to embed a diamond representing a string of length $2$ as is required. However, for the reduction to be valid we also need to leave at least one row of space above the diamonds so that all diamonds can be flooded simultaneously on any move.

To reduce the board height required for our \NPtime-hardness proof further we reduce the height of the diamond structures used in the reduction. Recall that the reduction begins by doubling the length of all strings in a way that ensures that no string contains a character which is followed immediately by another occurrence of the same character. We now show in Lemma~\ref{lem:SCSab} that the \SCS{} problem remains \NPtime-hard even when all the strings are of the form $ab$ where $a,b \in \Sigma$ and $a \neq b$. The proof is by reduction from the \SCS{} problem with the constraint that all strings have length at most $2$. This result allows us to remove the doubling step and reduce the height of the diamond structures, resulting in Theorem~\ref{thm:hNPC} which gives the desired result.

\begin{lemma}
    \label{lem:SCSab}
    The \SCS{} problem is \NPtime-hard when all the strings are of the form $ab$ where $a,b \in \Sigma$ and $a \neq b$.
\end{lemma}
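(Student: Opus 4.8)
The plan is to reduce from the \SCS{} problem restricted to strings of length at most $2$, which Timkovskii~\cite{Timkovskii:1989} showed is \NPtime-hard. Given such an instance with string set $S$ over alphabet $\Sigma$, I would first dispose of the easy strings: a string of length $0$ imposes no constraint and can be deleted, and a string of length $1$, say the single character $a$, can be replaced by the length-$2$ string $ac$, where $c$ is a brand new alphabet symbol used exclusively for this purpose (a fresh symbol for each such string, or a single shared fresh symbol appended at the end — either works with a small bookkeeping adjustment). The point is that $ac$ forces $a$ to appear and forces one extra character $c$ which can always be placed at the very end of any supersequence at a cost of exactly one, so the optimum length changes by a controlled additive amount. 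Similarly, a length-$2$ string $aa$ (two equal characters) can be replaced by $a a' $ where $a'$ is a fresh symbol, or handled by the same padding trick; since the doubling step in Lemma~\ref{lem:NPC-four} was precisely what removed equal adjacent characters, here I instead preprocess them away directly at the \SCS{} level.

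The key steps, in order, are: \textbf{(1)} start from an \SCS{} instance $(S,\ell)$ with all strings of length $\le 2$; \textbf{(2)} delete empty strings (they are free); \textbf{(3)} for each string of length $1$ or of the form $aa$, introduce a fresh symbol and lengthen/modify the string to a legal two-distinct-character string $ab$ with $a\neq b$, keeping track of how many fresh symbols $t$ were introduced; \textbf{(4)} observe that any common supersequence of the new instance can be assumed to place all the fresh symbols consecutively at the end, so the shortest common supersequence of the new instance has length exactly $\operatorname{SCS}(S)+t'$ for an explicitly computable $t'$ (essentially $t'=t$, since distinct fresh symbols appearing in distinct strings cannot be merged); \textbf{(5)} conclude that the decision question for the original instance with bound $\ell$ is equivalent to the decision question for the new instance with bound $\ell+t'$, giving a polynomial-time many-one reduction and hence \NPtime-hardness of the restricted problem.

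The main obstacle I anticipate is \textbf{step (4)}: arguing precisely that the fresh padding symbols contribute exactly $t'$ to the optimal supersequence length and cannot be ``shared'' with anything to save moves. The danger is that a fresh symbol introduced to fix string $s_i$ could coincide with a fresh symbol introduced for string $s_j$ and thereby be merged; this is easily avoided by using genuinely distinct fresh symbols per string, but then one must check that the optimum of the new instance is exactly $\operatorname{SCS}(S)$ plus the number of fresh symbols — the ``$\le$'' direction is clear (take an optimal supersequence of $S$, append all fresh symbols in any order, and verify it is a supersequence of the new strings), while the ``$\ge$'' direction requires noting that deleting all fresh symbols from any supersequence of the new instance yields a supersequence of $S$, and that the fresh symbols, being distinct and each required, contribute at least their count. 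A secondary subtlety is ensuring no newly created string has the form $xx$; using a fresh symbol distinct from the original character handles this automatically. Once these counting arguments are pinned down the reduction is routine, and it feeds directly into Theorem~\ref{thm:hNPC} by letting each length-$2$ string $ab$ be encoded by a diamond of small constant height, removing the earlier doubling blow-up.
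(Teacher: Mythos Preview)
Your handling of length-$1$ strings is fine, but the treatment of strings of the form $aa$ has a real gap. You propose to replace $aa$ by the single string $aa'$ with $a'$ fresh, and in step~(4) you assert that deleting all fresh symbols from a supersequence of the new instance yields a supersequence of the old one. This fails here: the original string $aa$ forces \emph{two} occurrences of $a$ in any supersequence, whereas $aa'$ forces only one. Concretely, take $S=\{aa\}$: then $|\SCS(S)|=2$, while your $S'=\{aa'\}$ has $|\SCS(S')|=2$ and $t'=1$, contradicting the claimed identity $|\SCS(S')|=|\SCS(S)|+t'$. The ``$\geq$'' direction of step~(4) simply does not hold for this gadget.

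The paper's reduction repairs exactly this point by replacing each string $aa$ with \emph{two} strings, $aa'$ and $a'a$, over a fresh symbol $a'$. Any common supersequence of the new instance must then contain either $aa'a$ or $a'aa'$ as a subsequence; in the first case deleting the $a'$ already leaves two $a$'s, and in the second case deleting both $a'$'s and reinserting one $a$ suffices, so in either case one obtains a common supersequence of $S$ of length $|\SCS(S')|-1$. Conversely, inserting $a'$ between the two $a$'s witnessing $aa$ in an optimal supersequence of $S$ shows $|\SCS(S')|\le|\SCS(S)|+1$. Your overall framework (eliminate length-$1$ strings, then eliminate $aa$ strings, tracking an additive shift in the optimum) matches the paper's; substituting this two-string gadget for your single-string replacement is what is needed to make step~(4) correct.
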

\begin{proof}
Let $S$ be an instance of \SCS{} that contains $k$ strings $s_1,\ldots,s_k$ each of length $w\leq 2$ over the alphabet $\Sigma$. We abuse notation by referring to $S$ as both the instance and the set of $k$ strings. We begin by assuming that $S$ contains a string of length $1$. Without loss of generality let $s_k$ be such a string. Let $S'$ be the instance of \SCS{} formed by the $k-1$ strings $s_1,\ldots s_{k-1}$. There are two cases to consider. In the first case, the single character, $a$, in $s_k$ occurs in some string $s_1,\ldots,s_{k-1}$. Therefore any common supersequence of $S'$ contains an $a$ and hence is also a common supersequence of $S$. Further, as $S$ is a superset of $S'$, any common supersequence of $S$ is a common supersequence of $S'$. Hence $|\SCS{}(S)|=|\SCS{}(S')|$. In the second case, the single character, $a$, in $s_k$ does not occur in $s_1,\ldots,s_{k-1}$. A common supersequence for $S$ can therefore be found by inserting $a$ at the end of the shortest common supersequence for $S'$. Hence $|\SCS{}(S)|\leq |\SCS{}(S')|+1$. Further, any common supersequence for $S$ must contain an $a$ and is also a common supersequence for $S'$. Therefore by removing the $a$ we have that $|\SCS{}(S)|\geq |\SCS{}(S')|+1$. Hence in this case $|\SCS{}(S)| = |\SCS{}(S')|+1$. Repeated application of the above technique gives a poly-time reduction from the \SCS{} problem with strings of length $w \leq 2$ to  the \SCS{} problem with strings of length $w = 2$. Therefore we have that the latter is also \NPtime-hard.

We now redefine $S$ to be an instance of \SCS{} that contains $k$ strings $s_1,\ldots,s_k$ each of length exactly $2$ over the alphabet $\Sigma$. We begin by assuming that $S$ contains a string of the form $aa$ where $a \in \Sigma$. Without loss of generality let $s_k$ be such a string. Let $S'$ be the instance of \SCS{} formed by the $k-1$ strings $s_1,\ldots s_{k-1}$ and new strings $s_{k+1}=aa'$ and $s_{k+2}=a'a$ where $a'$ does not occur in $S$. First consider the shortest common supersequence of $S$, which must contain $aa$ as a subsequence. By inserting $a'$ between these two occurrences of $a$, we obtain a common supersequence of $S'$ of length $|\SCS{}(S)|+1$. Therefore $|\SCS{}(S')| \leq |\SCS{}(S)|+1$. Now consider the shortest common supersequence of $S'$, which must contain either $aa'a$ or $a'aa'$ as a subsequence. In the former case by removing the $a'$ symbol we obtain a common supersequence of $S$ and have that $|\SCS{}(S')| \geq |\SCS{}(S)|+1$. In the latter, when we remove the two occurrences of $a'$ we obtain a common supersequence of $s_1,\ldots, s_{k-1}$ of length $|\SCS{}(S')|-2$. This sequence contains exactly one occurrence of $a$, and by inserting a second we obtain a common supersequence of $S$ of length $|\SCS{}(S')|-1$. Therefore $|\SCS{}(S')| = |\SCS{}(S)|+1$. Repeated application of the above technique gives a poly-time reduction from the \SCS{} problem with strings of length $2$ to  the \SCS{} problem with strings of the form $ab$ where $a,b \in \Sigma$ and $a \neq b$. Therefore we have that the latter is also \NPtime-hard. \qed
\end{proof}

\begin{theorem}
    \label{thm:hNPC}
     \Coloroid{(c,h)} and \ColoroidFree{(c,h)} are \NPtime-hard when $h\geq 3$ and the number of colours $c$ is unbounded (and the decision versions are \NPtime-complete).
\end{theorem}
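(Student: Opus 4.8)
The plan is to reuse the diamond gadget of Lemma~\ref{lem:NPC-four}, but fed with the restricted \SCS{} instances of Lemma~\ref{lem:SCSab}, in which every string is of the form $ab$ with $a\neq b$ over an unbounded alphabet $\Sigma$. Since such a string already has no two equal consecutive letters, the doubling/interspersing step of Lemma~\ref{lem:NPC-four} can be dropped and each string encoded by a radius-$2$ diamond. A radius-$2$ diamond is three rows tall; placed spanning all three rows of a \recdim{3}{n} board it would split the background colour into disconnected strips, which would both destroy the ``all gadgets progress in parallel'' behaviour and leave those strips unflooded at the end. The fix is to use a \emph{cropped} radius-$2$ diamond whose centre sits in the bottom row. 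Concretely, pick one fresh colour $\gamma\notin\Sigma$; on a \recdim{3}{n} board colour row~$1$ entirely $\gamma$, and for each string $s_i=a_ib_i$ place, in rows $2$--$3$, a gadget $G_i$ consisting of a centre tile in row~$3$ coloured $b_i$ together with its three in-board neighbours (one in row~$2$, two in row~$3$) coloured $a_i$; every other tile of rows $2$--$3$ is $\gamma$, and the gadgets are spaced far apart, so $n=O(k)$. The board uses $|\Sigma|+1$ colours, which is unbounded over the \SCS{} instances, and the reduction is clearly polynomial.

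I claim $m(B)=\ell$, where $\ell=|\SCS{}(s_1,\dots,s_k)|$, by the argument of Lemma~\ref{lem:NPC-four}. Row~$1$ is a single $\gamma$-region containing the top-left tile, and through the $\gamma$-tiles of rows $2$--$3$ it is connected and, from move~$0$ on, adjacent to all three $a_i$-tiles of every $G_i$; moreover it stays flooded forever. Within $G_i$ the three $a_i$-tiles are pairwise non-adjacent, so each can only merge by playing colour $a_i$, and the $b_i$-coloured centre has all its board-neighbours inside $G_i$ (its fourth neighbour is off the board), so it can only merge after an $a_i$-tile has, i.e.\ by playing $b_i$ strictly later. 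Hence any flooding sequence contains $a_ib_i$ as a subsequence for every $i$, is therefore a common supersequence, and has length $\geq\ell$; conversely, playing a shortest common supersequence floods every $G_i$, and since all $\gamma$-tiles were flooded from the start, the whole board becomes one region, so $\ell$ moves suffice. This establishes \NPtime-hardness of \Coloroid{(c,3)}; for $h>3$, add $h-3$ further all-$\gamma$ rows on top, keeping the gadgets anchored at the bottom row so the cropping is still valid --- the analysis is unchanged --- giving \NPtime-hardness of \Coloroid{(c,h)} for all $h\geq3$. The decision version is in \NPtime\ since a flooding sequence of length $\leq\ell$ is a polynomial, polynomial-time-checkable certificate.

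For \ColoroidFree{(c,h)} I follow the end of the proof of Lemma~\ref{lem:NPC-four}: form $B'$ by placing $M$ side-by-side copies of $B$ in one \recdim{h}{Mn} board, where $M$ exceeds the largest possible number of moves in an optimal flooding of $B'$ (here $M=2k+1$ works, since flooding from the top-left tile floods all copies in parallel using $\ell\leq 2k$ moves). A move made from inside some copy of $G_i$ affects only that gadget, and there are $M$ copies of it, so the recursive argument of Lemma~\ref{lem:NPC-four} shows that no optimal solution gains anything by playing inside a gadget; hence the optimum is again $\ell$. So \ColoroidFree{(c,h)} is \NPtime-hard for $h\geq3$, and its decision version is in \NPtime.

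The main obstacle is precisely this height-$3$ gadget design: reconciling (i) keeping the background connected --- needed so every gadget is reachable from move~$0$ and so nothing is left unflooded at the end --- with (ii) never exposing a gadget's centre, so that the centre genuinely cannot be flooded before its outer layer and the subsequence constraint is forced. The cropped diamond with its centre in the bottom row and exactly one free $\gamma$-row above is what satisfies both at once; the remaining verifications are routine adaptations of Lemmas~\ref{lem:NPC-four} and~\ref{lem:SCSab}.
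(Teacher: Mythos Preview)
Your proposal is correct and takes essentially the same approach as the paper. The paper's ``half diamond'' is precisely your cropped radius-$2$ diamond with its centre in the bottom row (a single $b_i$ tile surrounded on its three in-board sides by $a_i$ tiles), placed on a board of height $h\geq 3$ filled with a background colour; the reduction is from the same restricted \SCS{} of Lemma~\ref{lem:SCSab}, the conclusion $m(B)=|\SCS{}(s_1,\dots,s_k)|$ is obtained by the same argument, and the free variant is handled by replicating each gadget $2k+1$ times exactly as you do.
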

\begin{proof}
First observe that the decision versions of both problems are in \NPtime{} because the unconstrained versions, \Coloroid{c} and \ColoroidFree{c}, are in \NPtime{}. We begin by considering the \Coloroid{(c,h)} problem for $h \geq 3$. We reduce from an instance of \SCS{} on $k$ strings $s_1,\dots,s_k$ over the alphabet $\Sigma$ and an integer $\ell$. The strings are constrained to have the form $ab$ where $a,b \in \Sigma$ and $a \neq b$. Let $B$ be a \recdim{h}{n} board filled with a single background colour where $n=4k+1$. For each symbol in $\Sigma$ we have a corresponding distinct colour in addition to the background colour. For each string $s_i = a_ib_i$ we embed a `half' diamond against the bottom edge of the board. The half diamond consists of a single tile of colour $b_i$ (the inner layer), surrounded on all three sides by a tile of colour $a_i$ (the outer layer). This is illustrated in Figure~\ref{fig:height3} for $h=3$.

\begin{figure}[t]
    \centering
	\tikzstyle{box}=[very thick]	
\begin{tikzpicture}[inner sep=2pt,scale=0.4]

	\scope[shift={(0,0)}]

		\drawrow{1} {88888888888888888}
		\drawrow{2} {88288818883888288}
		\drawrow{3} {82328121832382128}

		\drawrowL{1} {00000000000000000}
		\drawrowL{2} {00200010003000200}
		\drawrowL{3} {02320121032302120}
		\drawbox{17}{3}
	\endscope

	\end{tikzpicture} 
    \caption{An example of a board constructed in the proof of Theorem~\ref{thm:hNPC}. In left-to-right order, the strings embedded are ``23'',``12'',``32'' and ``21''. The shortest common supersequence is ``2132''.}
    \label{fig:height3}
\end{figure}

Observe that as $h \geq 3$ and $n=4k+1$, all the half diamonds can be placed so that the outer layer of each half diamond is surrounded by the background colour. Therefore on any move, the outer layer of any half diamond can be flooded. Further observe that for all $i$, as $a_i \neq b_i$ the diamond for $s_i$ is flooded if and only if the move sequence contains $a_ib_i$ as a subsequence. Therefore a move sequence floods the board if and only if it is a common supersequence of $s_1,\dots,s_k$, so $|\SCS{(S)}|$ equals the length of the shortest move sequence which floods the board. As this reduction can be implemented in polynomial time, we have that \Coloroid{(c,h)} problem is \NPtime-hard with an unbounded number of colours.

We now consider the \ColoroidFree{(c,h)} problem for $h \geq 3$. \NPtime-hardness follows by the same argument as for the \NPtime-hardness of \ColoroidFree{c} for $c \geq 4$ given in the proof of Lemma~\ref{lem:NPC-four}. We increase the size of the board (horizontally) and embed $2k+1$ copies of each half diamond. We observe that any flood-filling move begun from a tile in an unflooded half diamond floods only tiles in that half diamond. This ensures that any move sequence which floods the board and contains moves begun from tiles in an unflooded half diamond either contains at least $2k+1$ moves or contains redundant moves. In either case, it is not minimal. \qed
\end{proof}


We finally show that \Coloroid{(c,2)} is in \Ptime.

\begin{theorem}
    \label{thm:height2}
    For any $c\geq 1$, \Coloroid{c} on a \recdim{2}{n} board is in \Ptime. More precisely, the running time is $O(n)$.
\end{theorem}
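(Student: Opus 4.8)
The plan is to reduce $m(B)$ to a shortest‑path computation that sweeps the columns once, after showing that some optimal play keeps the flooded region ``thin''. \textbf{Canonicalisation.} First I would merge every maximal block of consecutive columns that agree in both rows into a single column; tiles in such a block lie permanently in one region, so this changes neither $m(B)$ nor which operations are available, and it costs $O(n)$. Afterwards consecutive columns differ in at least one row, so a single move can push the flooded frontier of a row past more than one column only by absorbing a monochromatic run lying entirely in that row, and the extent of such a run is determined by the current frontier and the chosen colour. Moreover, a move absorbing no new tile may always be deleted from a sequence, so from any position only the $O(1)$ ``frontier‑relevant'' colours (those appearing in the first one or two not‑yet‑flooded columns of a row) need be considered.

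\textbf{Structural lemma.} The heart of the proof is to show there is an optimal sequence in which, after every move, the flooded region is \emph{monotone}: some prefix of columns $1,\dots,p$ is entirely flooded, exactly one row is flooded further right (to a position fixed by $p$, the current colour and the board), and only a bounded number of tiles hang ``from the side'' near the leading frontier. Since $B$ has height $2$, any connected region is interval‑like — it cannot reach around an unflooded column — which already forces much of this shape; the rest is a set of exchange arguments showing that whenever an optimal play would create a wilder region, the offending moves can be reordered or replaced without lengthening the sequence. This I expect to be the main obstacle: the two rows can legitimately get arbitrarily far out of step, the lead may pass back and forth between them, and sideways absorptions genuinely occur, so keeping the frontier provably thin in \emph{some} optimum needs care.

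\textbf{The sweep.} Granting the lemma, the game state after any move is captured by $O(1)$ data attached to a column index $p\le n$ (which row leads, the current colour, whether a stray tile is present); each move strictly enlarges the region and never retreats the frontier, so the reachable states form a DAG on $O(n)$ vertices with $O(1)$ unit‑cost out‑edges each, computable from a constant number of columns. A single dynamic‑programming pass in order of $p$ then yields $m(B)$ in $O(n)$ time, and backtracking recovers an optimal move sequence.
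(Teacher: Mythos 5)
Your plan hinges entirely on the structural lemma in your middle paragraph, and that lemma is both unproven and, as stated, false. Consider the \recdim{2}{n} board whose second row is entirely colour $1$ and whose first row reads $1,2,1,3,1,4,\dots$. Already in the \emph{initial} position the region containing the top-left tile consists of all of row~2 together with the odd-indexed tiles of row~1, leaving $\Theta(n)$ unflooded ``holes'' of pairwise distinct colours strictly behind the row-2 frontier. Every play passes through this state, so no optimal sequence keeps the region in the form ``a fully flooded prefix of columns, one row reaching further, plus $O(1)$ stray tiles''. Worse, after a few moves the set of surviving holes depends on the order in which colours have been played, not just on the frontier position, so the game state is genuinely not captured by $O(1)$ data attached to a column index and the proposed $O(n)$-vertex DAG with $O(1)$ out-degree does not exist. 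Acknowledging that the lemma ``needs care'' does not discharge it: it is the whole content of the theorem, and the exchange arguments you defer to would have to contend with exactly these examples.

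What is missing is a reason why the trailing holes never cost extra moves. The paper supplies one: call a tile \emph{marked} if its colour occurs in no column strictly to its right. Flooding all marked tiles floods the board, because on a height-$2$ board the connected flooded region joining $(1,1)$ to a marked tile $t$ meets every intervening column, so every unflooded tile weakly to the left of $t$ is adjacent to the region and all remaining occurrences of $t$'s colour are absorbed in the very move that floods $t$. This reduces the problem to reaching the marked columns in left-to-right order via shortest paths (with a short case analysis on which of the two tiles of a marked column to reach first), each shortest path being computed by a local dynamic program, for $O(n)$ total time. Your sweep could likely be repaired along these lines, but the marked-tile observation (or an equivalent device that makes the holes irrelevant) is the missing idea; without it the claimed $O(1)$ state space, and hence the $O(n)$ bound, does not follow.
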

\begin{proof}
    Suppose that $B$ is a \recdim{2}{n} board and $c$ is the number of colours. We say that a tile $t$ on $B$ is \emph{marked} if it has colour $c_t$ and no other tile in the columns strictly to the right of $t$ has the colour $c_t$. A column is marked if it contains a marked tile.

    The key observation, which holds on a \recdim{2}{n} board, is that if the marked tiles are flooded then so is the whole board $B$. To see this, note that when a marked tile $t$ of colour $c_t$ is flooded, all other tiles of the colour $c_t$ that have not yet been flooded are to the left of $t$ and therefore adjacent to the flooded region. Hence they will be flooded when $t$ is flooded. Thus, we ask for the shortest sequence of moves that would flood the marked tiles.

    A \emph{shortest path} to a tile $t$ denotes a shortest sequence of flood filling operations that includes  $t$ in the flooded region. If $t$ is already included in the flooded region, then the length of the shortest path to $t$ is~0.

    One might think that a solution to \Coloroid{c} on a \recdim{2}{n} board would be to go from one marked tile to the next in left-to-right order using shortest paths. Although this is correct, we must be a little careful with which shortest paths we choose. The following procedure floods the marked tiles in the smallest number of moves possible.

    \begin{description}
        \item{\bf Beginning of procedure.} Let $i$ be the leftmost marked column such that $i$ contains a marked tile $t$ that has not yet been flooded. Let $t'$ be the other tile in column $i$. We have two cases.
            \begin{description}
                \item{\bf Case 1 ($t'$ is unmarked).} Let $m$ and $m'$ be the lengths of the shortest paths to $t$ and $t'$, respectively. Note that $|m-m'|\leq 1$. We consider two subcases.
                    \begin{description}
                        \item{\bf Case 1a ($m\leq m'$).} Flood using the sequence of colours found along the shortest path to $t$, then go to the beginning of the procedure. Correctness: Flooding $t'$ before $t$ means that we are bound to flood $t$ at a later stage. Once $t'$ is flooded we can never do worse by flooding $t$ immediately. Thus, flooding $t'$ before $t$ and then flooding $t$ takes a total of at least $m+1$ moves. However, flooding $t$ takes $m$ moves and we are not necessarily forced to spend an extra move on flooding $t'$, which is not a marked tile.
                        \item{\bf Case 1b ($m>m'$).}  Flood using the sequence of colours found along the shortest path to $t'$ and then flood $t$. Then go to the beginning of the process. Correctness: Flooding $t$ takes at at least $m'+1$ steps, even if we do not go via $t'$. Since all remaining marked tiles are to the right of column $i$, we should therefore flood $t'$ before $t$. Once $t'$ is flooded, we can never do worse by flooding $t$ immediately.
                \end{description}
                \item{\bf Case 2 ($t'$ is marked).} Flood using the sequence of colours found along the shortest of the shortest paths to $t$ or $t'$. Then flood the remaining tile in column $i$. Then go to the beginning of the process. Correctness: Both $t$ and $t'$ must eventually be flooded. Once one of them is flooded, there is no reason to wait to flood the other.
            \end{description}
    \end{description}

    Using for example dynamic programming, the shortest path to a tile $t$ on a \recdim{2}{n} board can be computed in time linear in the distance between the flooded region and $t$. We note that the shortest paths are always calculated between the rightmost end of the flooded region and a marked column~$i$. Since the flooded region is always extended to column $i$ in each step of the procedure, the total running time of computing the shortest paths is linear in $n$. Hence the running time of the whole process is $O(n)$. \qed
\end{proof}

\section{Approximating the number of moves} \label{sec:approx}
As we have seen, \Coloroid{c} and \ColoroidFree{c} are not efficiently approximable to within a constant factor for an unbounded number of colours $c$. However, a $(c-1)$-approximation for \Coloroid{c}, $c \geq 3$, can easily be obtained as follows. Suppose that $B$ is a board on the colours $1,\dots,c$. Clearly, if we repeatedly cycle through the sequence of colours $1,\dots,c$  then $B$ will be flooded after at most $c\times m(B)$ moves. We can do a little better by first cycling through the ordered sequence of colours $1,\dots,c$ and then repeatedly alternating between a cycle of the sequence $(c-1),\dots,1$ and a cycle of $2,\dots,c$ until there are only two distinct colours left on the board, after which we alternate between the two remaining colours. Note that there are always exactly two distinct colours left before the final move. The board $B$ is guaranteed to be flooded after at most $c+(c-1)(m(B)-2)+1\leq (c-1)m(B)$ moves, which gives us a $(c-1)$-approximation algorithm.

A randomised approach with an expected number of moves of approximately $2c/3\times m(B)$ is obtained as follows. Suppose that $s$ is a minimal sequence of colours that floods $B$ (flood filling from the top left square in each move). We shuffle the $c$ colours and process them one by one. If $B$ is not flooded then we shuffle again and repeat. Note that this procedure could (and most likely will) generate many useless moves that do not merge any monochromatic regions. Thus, if $m(B)=1$ then the algorithm could take up to $c$ moves, although a single move would suffice. If $m(B)=2$ then $c + \frac{1}{2} c=3c/2$ is an upper bound on the expected number of moves; with probability $1/2$, the two moves in $s$ appear in the same order as in the shuffled sequence of colours, and if not, we might have to shuffle the colours again and repeat one last time. We generalise this as follows. Let $T(m)$ be (an upper bound on) the expected number of moves it takes to produce a fixed sequence of $m$ moves. We have $T(m)=c+\frac{1}{2}T(m-1)+\frac{1}{2}T(m-2)$. Solving the recurrence with the values of $T(1)$ and $T(2)$ above gives us a solution in which $T(m)$ is asymptotically $(2c/3)m$ for a fixed $c$.
%
%
%

\section{General bounds on the number of moves} \label{sec:bound}
Recall that we denote the minimum number of moves which flood some board $B$ as $m(B)$. In this section we investigate bounds on the maximum $m(B)$ over all boards in $B_{n,c}$ which we denote \Mnc{}. Intuitively, this can be seen as the minimum number of moves to flood the `worst' board in $B_{n,c}$.

For motivation, consider an \recdim{n}{n} checker board of two colours as shown in Figure~\ref{fig:uplow}. First observe that as the board has only two colours, the player has no choice in their next move. Consider a diagonal of tiles in the direction top-right to bottom-left where the 0th diagonal is the top-left corner. Further observe that move $k$ floods exactly the $k$th diagonal, so the total number of moves is $2(n-1)$. Thus we have shown that $\Mnc \geq 2(n-1)$.

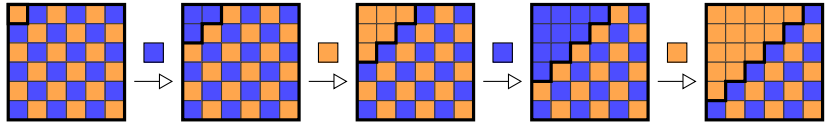
\begin{figure}[t]
\centering
	\tikzstyle{box}=[very thick]
	\begin{tikzpicture}[inner sep=2pt,scale=0.255]

	\scope[shift={(0,0)}]

  		\drawrow{1}{323232};
		\drawrow{2}{232323};
		\drawrow{3}{323232};
		\drawrow{4}{232323};
		\drawrow{5}{323232};
		\drawrow{6}{232323};

		\draw[box] (0,0) rectangle +(6,-6);
		\draw[box] (0,0) rectangle +(1,-1);
	\endscope

	\fill[c2] (7,-3) rectangle +(1,1);
	\draw[] (7,-3) rectangle +(1,1);

	\draw[-open triangle 60] (6.5,-4) -- +(2,0);
	\scope[shift={(9,0)}]
  		\drawrow{1}{223232};
		\drawrow{2}{232323};
		\drawrow{3}{323232};
		\drawrow{4}{232323};
		\drawrow{5}{323232};
		\drawrow{6}{232323};

		\draw[box] (0,0) rectangle +(6,-6);
		\draw[box] (2,0) -- ++(0,-1) -- ++(-1,0) -- ++(0,-1) -- ++(-1,0);
	\endscope

	\fill[c3] (16,-3) rectangle +(1,1);
	\draw[] (16,-3) rectangle +(1,1);

	\draw[-open triangle 60] (15.5,-4) -- +(2,0);

	\scope[shift={(18,0)}]
  		\drawrow{1}{333232};
		\drawrow{2}{332323};
		\drawrow{3}{323232};
		\drawrow{4}{232323};
		\drawrow{5}{323232};
		\drawrow{6}{232323};

		\draw[box] (0,0) rectangle +(6,-6);
		\draw[box] (3,0) -- ++(0,-1) -- ++(-1,0) -- ++(0,-1) -- ++(-1,0) -- ++(0,-1) -- ++(-1,0); (-2,0);
	\endscope

	\fill[c2] (25,-3) rectangle +(1,1);
	\draw[] (25,-3) rectangle +(1,1);

	\draw[-open triangle 60] (24.5,-4) -- +(2,0);

	\scope[shift={(27,0)}]
  		\drawrow{1}{222232};
		\drawrow{2}{222323};
		\drawrow{3}{223232};
		\drawrow{4}{232323};
		\drawrow{5}{323232};
		\drawrow{6}{232323};

		\draw[box] (0,0) rectangle +(6,-6);
		\draw[box] (4,0) -- ++(0,-1) -- ++(-1,0) -- ++(0,-1) -- ++(-1,0) -- ++(0,-1) -- ++(-1,0) -- ++(0,-1) -- ++(-1,0);
	\endscope

	\fill[c3] (34,-3) rectangle +(1,1);
	\draw[] (34,-3) rectangle +(1,1);

	\draw[-open triangle 60] (33.5,-4) -- +(2,0);

	\scope[shift={(36,0)}]
  		\drawrow{1}{333332};
		\drawrow{2}{333323};
		\drawrow{3}{333232};
		\drawrow{4}{332323};
		\drawrow{5}{323232};
		\drawrow{6}{232323};

		\draw[box] (0,0) rectangle +(6,-6);
		\draw[box] (5,0) -- ++(0,-1) -- ++(-1,0) -- ++(0,-1) -- ++(-1,0) -- ++(0,-1) -- ++(-1,0) -- ++(0,-1) -- ++(-1,0) -- ++(0,-1) -- ++(-1,0);
	\endscope

	\end{tikzpicture} 
\caption{Progression of a \recdim{6}{6} checker board. }
\label{fig:uplow}
\end{figure}

We now give an overview of a simple algorithm which floods any board in $B_{n,c}$ in at most $c(n-1)$ moves. The algorithm performs $n$ stages. The purpose of the $i$th stage is to flood the $i$th row.  Stage $i$ repeatedly picks the colour of the leftmost tile in row $i$ which is not in the flooded region, until row $i$ is flooded.

First observe that Stage $1$ performs at most $n-1$ moves to flood row $i$ (we can flood at least one tile of row $1$ per move). When the algorithm begins Stage $i\geq 2$, observe that row $i-1$ is entirely flooded as well as any tiles in row $i$ which match the colour of row $i-1$. Therefore when a new colour is selected, all tiles in row $i$ of this colour become flooded. Hence at most $c-1$ moves are performed by Stage $i$. Summing over all rows, this gives the desired bound that $\Mnc \leq c(n-1)$. Observe that from the previous example with the checker board on $c=2$ colours, the bound $c(n-1)$ is tight. Thus, the checker board is the `worst' board in $B_{n,2}$.

As motivation, we have given weak bounds on \Mnc{}. We now tighten these bounds for large $c$ by providing a better algorithm for flooding an arbitrary board. We will also give a description of `bad' boards which require many moves to be flooded. It will turn out that \Mnc{} is asymptotically $\Theta(\sqrt{c}\,n)$ for increasing $n$ and~$c$.

\newcommand{\rem}{\text{rem}}
\begin{theorem} \label{thm:good}
There exists a polynomial time algorithm for \Floodit{} which can flood any \recdim{n}{n} board with $c$ colours in at most $2n+(\sqrt{2c})n+c$ moves.
\end{theorem}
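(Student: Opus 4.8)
The plan is to improve on the naive row-by-row algorithm (which costs $c(n-1)$ moves) by processing the board in horizontal \emph{bands} of $b$ consecutive rows at a time, for a parameter $b$ to be optimised at the end. The key structural idea: within a band of height $b$, each column contains at most $b$ distinct colours, so the number of \emph{distinct} colours that actually appear in any single column of the band is bounded by $\min\{b,c\}$; more usefully, I want to show that a band can be flooded using a number of moves that trades off against $b$. First I would set up the band decomposition: partition the $n$ rows into $\lceil n/b\rceil$ bands, process them top-to-bottom, and maintain the invariant that when we start a band, every row above it is completely flooded (and hence so are all tiles in the current band that happen to share the flooded colour). The cost of the algorithm is then the sum, over bands, of the cost to flood that band given that the region directly above it is already uniform.

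The heart of the argument is bounding the per-band cost. For a single band of height $b$, I would flood it column-by-column from left to right, in the spirit of the $O(n)$-move argument for height-$2$ boards (Theorem~\ref{thm:height2}): to absorb the next unflooded column we need a short sequence of moves whose length is at most the "local complexity" of that column, which is at most $b$ (its height), but summed cleverly. Concretely, a band of height $b$ and width $n$ can be flooded in at most $b + (n-1) \cdot (\text{something})$ moves — but to get the $\sqrt{c}$ behaviour I instead want the bound: a band is floodable in $O(b \cdot n / b + \ldots)$, i.e. I should count that each \emph{new colour} introduced costs one move and within a band of height $b$ there are at most $bn$ tiles hence the band needs at most about $2n + bn/b \cdot(\ldots)$ — this is the step I must get exactly right. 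The clean version: flood row $i$ of the band; since rows above are flooded, flooding row $i$ costs at most $\min\{c,b\}$ new moves after the first band-row (as in the $c(n-1)$ argument restricted to the band), but we also must first "reach" the band, which costs at most $n$ moves using one move per column of the top band-row; and crucially the second and subsequent rows of a band cost at most $b$ each rather than $c$ each because... — here I would actually argue each band costs at most $n + (b-1)\cdot\min\{c,b\}$-ish, then sum $\lceil n/b\rceil$ of these to get roughly $\frac{n}{b}(n + b\cdot b) = \frac{n^2}{b} + bn$, and optimise $b = \sqrt{c}\cdot(\text{const})\cdot$ — no: the $n^2/b$ term is too big. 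The correct trade-off must make the cross term disappear, so the real per-band bound has to be $n + (b-1)\cdot(\text{const})$ independent of how wide the band is beyond the first-row traversal, giving total $\lceil n/b\rceil \cdot (n + \ldots)$ — again $n^2/b$.

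Let me restate the intended mechanism, which I believe is the genuinely clever point: flood the entire board's columns in left-to-right sweeps, where one \emph{sweep} costs at most $n + c$ moves (one move per column via a shortest local path, plus $c$ to clean up), and a sweep that starts from an already-uniform left region reduces the number of "uncleared" rows; after $O(\sqrt{c})$ sweeps combined with $O(\sqrt{c})$ of the row-band cleanups, everything is flooded. The step I expect to be the main obstacle is proving the correct per-sweep / per-band bound — namely that processing a band of height $b$ given a uniform region above costs at most roughly $n + b\cdot\sqrt{c}$ or $2n + c$ moves in a way that, when summed over $n/b$ bands with $b=\Theta(\sqrt c)$, yields exactly $2n + \sqrt{2c}\,n + c$. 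This requires carefully charging: the $2n$ term to the $n/b \approx n/\sqrt{c}$ band-traversals each costing $\sqrt{c}$-ish overhead plus the within-band column sweeps, the $\sqrt{2c}\,n$ term to the dominant work inside bands, and the $+c$ term to a final single-band cleanup using the $c(n-1)$-style argument on the last band of constant-bounded remainder. I would close by choosing $b = \lceil\sqrt{2c}\rceil$, checking $\lceil n/b\rceil \cdot (\text{per-band bound}) \le 2n + \sqrt{2c}\,n + c$ by a direct (routine) arithmetic estimate, and noting the algorithm is clearly polynomial time since each band is handled by the linear-time shortest-path dynamic program of Theorem~\ref{thm:height2} generalised to height $b$.
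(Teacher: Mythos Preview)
Your proposal has a genuine gap: the band-by-band mechanism you sketch cannot achieve the stated bound, and you never arrive at the idea that does.

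Every variant you try floods each band (or performs each sweep) separately, and each such pass costs $\Omega(n)$ moves just to traverse the width of the board. With $\lceil n/b\rceil$ bands this gives an unavoidable $\Theta(n^2/b)$ term, exactly the ``$n^2/b$ is too big'' obstacle you flag yourself. No choice of $b$ eliminates it: minimising $n^2/b + (\text{anything nonnegative})$ over $b\le n$ still leaves at least $n$, but more to the point the cross-term you need to balance against is not there, so you cannot reproduce the $\sqrt{c}\,n$ behaviour. Your appeal to the height-2 dynamic program does not help, since that algorithm also spends $\Theta(n)$ moves on an \recdim{2}{n} board.

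The missing idea is that the expensive work must be \emph{shared} across all bands simultaneously. The paper's algorithm does this as follows: first flood the leftmost column; then flood $\ell$ evenly spaced full rows (cost $\le \ell(n-1)$ in total); finally, simply cycle once through all $c$ colours repeatedly. After the anchor rows are in the flooded region, a single pass through the $c$ colours extends the flooded region by one row upward and one row downward from \emph{every} anchor row at once. Since consecutive anchor rows are about $n/\ell$ apart, roughly $n/(2\ell)$ cycles suffice, costing about $c\,n/(2\ell)$ moves. The total is essentially $\ell n + c n/(2\ell)$, minimised at $\ell=\lceil\sqrt{c/2}\rceil$ to give $\sqrt{2c}\,n$ plus lower-order terms. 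The crucial difference from your attempts is that Stage~3 is a single global colour-cycling phase whose cost does \emph{not} scale with the number of bands; your per-band processing cannot exploit this parallelism.
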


\begin{proof}

For a given integer $\ell$ (to be determined later), we partition the board horizontally into $\ell+1$ contiguous sections, denoted $S_0,\dots,S_\ell$ from top to bottom, as follows. Let $q=\lfloor n/\ell \rfloor$ and $r=n \mod \ell$. Section $S_0$ consists of the first $\lceil q/2 \rceil$ rows, $S_1,\dots,S_r$ contain $(q+1)$ rows each (if $r>0$), and $S_{r+1},\dots,S_{\ell-1}$ contain $q$ rows each (if $r<\ell-1$). Section $S_\ell$ contains $\lfloor q/2 \rfloor$ rows. See Figure~\ref{fig:upper} for an illustration. We let $y(i)$ denote the final row of $S_i$.

The algorithm performs the following three stages.
\begin{description}
    \item{Stage~1.} Flood the first column.
    \item{Stage~2.} Flood row $y(x)$ for all $0\leq x < \ell$.
    \item{Stage~3.} Cycle through the $c$ colours until the board is flooded.
\end{description}

The correctness of our algorithm is immediate as Stage~$3$ ensures that the board is flooded by cycling colours. Stage~$1$ can be implemented to perform at most $n-1$ moves as argued for the simple algorithm above. Similarly, Stage~$2$ can be completed in $\ell (n-1)$ moves. We now analyse Stage~$3$.
\begin{figure}[t]
    \begin{minipage}[b]{0.50\linewidth}
        \centering
        \tikzstyle{box}=[thick]
	\begin{tikzpicture}[inner sep=2pt,scale=0.16]

	\scope[shift={(0,0)}]

		\draw[box] (0,0) rectangle (25,25);
		\draw[box] (0,0) rectangle (1,25);

		\draw[box] (1,0) rectangle (25,4); 

		\draw[box] (1,4) rectangle (25,5);
		\draw[box] (1,5) rectangle (25,12); 

		\draw[box] (1,12) rectangle (25,13);
		\draw[box] (1,13) rectangle (25,21); 

		\draw[box] (1,21) rectangle (25,22); 
		\draw[box] (1,22) rectangle (25,25);

 		\draw (12,23.5) node {$S_0$};
 		\draw (12,17) node {$S_1$};
 		\draw (12,8.5) node {$S_2$};
 		\draw (12,2) node {$S_3$};

		\draw (-2,21.5) node {$y(0)$};
		\draw (-2,12.5) node {$y(1)$};
		\draw (-2,4.5)  node {$y(2)$};

		\draw[|-|] (26,25) -- (26,21);
		\draw[|-|] (26,21) -- (26,12);
		\draw[|-|] (26,12) -- (26,4);
		\draw[|-|] (26,4) -- (26,0);

 		\draw (29,23) node {$\lceil q/2 \rceil$};
 		\draw (29,17) node {$q+1$};
 		\draw (29,8.5) node {$q$};
 		\draw (29,2) node {$\lfloor q/2 \rfloor$};
	\endscope

	\end{tikzpicture} 
        \caption{The board decomposition used in the proof of Theorem~\ref{thm:good}.\label{fig:upper}}
    \end{minipage}
    \hspace{0.06\linewidth}
    \begin{minipage}[b]{0.44\linewidth}
        \centering
        	\begin{tikzpicture}[inner sep=2pt,scale=0.2]

%
%

	\scope[shift={(0,0)}]

		\drawrow{1} {28888888288888882222}
		\drawrow{2} {82888882828888828222}
		\drawrow{3} {88288828882888288822}
		\drawrow{4} {88828288888282888882}
		\drawrow{5} {88882888888828888888}
		\drawrow{6} {88828288888282888882}
		\drawrow{7} {88288828882888288822}
		\drawrow{8} {82888882828888828222}
		\drawrow{9} {28888888288888882222}
		\drawrow{10}{82888882828888828222}
		\drawrow{11}{88288828882888288822}
		\drawrow{12}{88828288888282888882}
		\drawrow{13}{88882888888828888888}
		\drawrow{14}{88828288888282888882}
		\drawrow{15}{88288828882888288822}
		\drawrow{16}{82888882828888828222}		
		\drawrow{17}{28888888288888882222}
		\drawrow{18}{22888882228888822222}
		\drawrow{19}{22288822222888222222}
		\drawrow{20}{22228222222282222222}

		\drawbox{20}{20};
	

	\endscope

	\end{tikzpicture} 
        \caption{4-diamonds packed in a \recdim{20}{20} board.\label{fig:packed}}
    \end{minipage}
\end{figure}

First consider $S_0$. At the start of Stage $3$, row $y(0)$ is entirely in the top-left region, so a single cycle of the $c$ colours suffices to expand the region to include row $y(0)-1$. Each subsequent cycle of $c$ colours expands the region to include an additional row. Therefore, after $c(\lceil q/2 \rceil -1)\leq cq/2$ moves of Stage $3$, all rows above $y(0)$ are included in the top left region. Similarly, the section $S_\ell$ will be included in the top-left region as it contains $\lfloor q/2 \rfloor \leq q/2$ rows.

Now consider section $S_i$ for some $0 < i < \ell$. Observe that there are at most $q$ rows in $S_i$ which are not already completely in the top-left section (after stage~2). Further observe that any cycle of $c$ colours expands the region to include \emph{two} more of these rows. One row is gained from the region bordering the top of the section (which is in the top-left region from stage $2$). The second is gained from the region bordering the top of the section (which is also in the top-left region from stage $2$). Therefore after at most $c\lceil{q/2}\rceil$ moves of Stage $3$ the board is flooded.

Over all three stages this gives a total of at most $n + \ell n + c\lceil{q/2}\rceil$ moves. We pick $\ell=\lceil{{\sqrt{c/2}}}\rceil$ to minimise this number of moves. By recalling that $q=\lfloor n/\ell \rfloor$ and simplifying we have that this total is less than $2n + {\sqrt{2c}\,n} + c$ moves as required. \qed
\end{proof}

\begin{theorem} \label{thm:bad}
    For $2\leq c\leq n^2$, there exists an \recdim{n}{n} board with (up to) $c$ colours which requires at least $\sqrt{c-1}\, n/2-c/2$ moves to flood.
\end{theorem}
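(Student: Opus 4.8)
The plan is to exhibit a ``hard'' board by embedding many \emph{diamonds} into it, arranged so that different diamonds cannot be flooded cooperatively, and then to add up the moves that each diamond independently forces. This recycles the gadget behind the proof of Lemma~\ref{lem:NPC-four}: if a diamond has monochromatic concentric layers, its outermost layer carries the background colour (so that layer lies in the region of the top-left tile from the start), and no two adjacent layers share a colour, then the flooded region can eat into the diamond only one layer at a time, matching the colour of the current frontier layer, and it is forced to do so from the outside inwards. Consequently, reading the layer colours of such a diamond from the second-outermost layer down to its centre yields a string that must occur as a subsequence of the colour sequence of every flooding sequence.

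Concretely, I would set $p=\lfloor c/2\rfloor$ and let the $j$-th ``type'' of diamond use, in its interior, only the two colours $2j-1$ and $2j$ in strict alternation, with a colour-$1$ outermost layer (so colour $1$ simultaneously plays the role of the background and of one of the colours of type~$1$). Since $2p\le c$, all $p$ colour pairs $\{2j-1,2j\}$ are available and pairwise disjoint. I would then pick a common radius $r$ and place $p$ pairwise-disjoint $r$-diamonds, one of each type, into the board using the dense diagonal packing of Figure~\ref{fig:packed} (cropping diamonds against the board edges where needed); such a packing lets $r$ be taken as large as roughly $n/\sqrt{2p}$, i.e.\ $r=\Theta(n/\sqrt{c})$. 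The colour-$1$ outer layers are mutually adjacent and contain the top-left tile, so they form the initial flooded region; and because the colour pairs are disjoint, any single move advances the frontier of diamonds of at most one type and leaves all the other types untouched. Hence each type-$j$ diamond forces an alternating string over $\{2j-1,2j\}$ of length $r-1$ to appear as a subsequence of the move sequence.

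The counting step is then immediate: the restriction of any flooding sequence to each colour pair $\{2j-1,2j\}$ must be a supersequence of the corresponding length-$(r-1)$ string, so it has at least $r-1$ characters; as these restrictions use disjoint alphabets they pick out disjoint positions, and so the whole flooding sequence has length at least $p(r-1)$. It remains only to check that, with $r$ as large as the packing allows and $p=\lfloor c/2\rfloor$, one indeed has $p(r-1)\ge\sqrt{c-1}\,n/2-c/2$; this is a routine but slightly delicate computation, in which the term $-c/2$ in the statement is exactly what absorbs the ``$-1$'' in $r-1$ summed over the $p$ diamonds together with the rounding losses in the packing. (For the handful of small $c$ for which the packing is too wasteful --- in particular $c=2$ --- one can instead fall back on the \recdim{n}{n} checkerboard of Figure~\ref{fig:uplow}, which uses only two colours and already needs $2(n-1)$ moves, comfortably exceeding $\sqrt{c-1}\,n/2-c/2$ as long as $\sqrt{c-1}$ is bounded by a small constant.)

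The part I expect to require the most care is the geometric bookkeeping: determining precisely how large the common radius $r$ can be, i.e.\ how tightly $p$ (possibly cropped) diamonds can be packed into an \recdim{n}{n} board with all their outer layers still connected to the top-left tile, so that $p(r-1)$ comes out to the sharp bound $\sqrt{c-1}\,n/2-c/2$ rather than merely $\Theta(\sqrt{c}\,n)$. By contrast, the combinatorial heart --- ``the frontier advances exactly one layer per matching move, in layer order'', together with the disjoint-alphabet supersequence lower bound --- is short, and is already essentially contained in the proof of Lemma~\ref{lem:NPC-four}.
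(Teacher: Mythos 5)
Your construction and counting argument are essentially those of the paper: pack $\lfloor c/2\rfloor$ radius-$r$ diamonds whose layers alternate between the two colours of a private pair, take $r\approx n/\sqrt{c}$ via the diagonal packing of Figure~\ref{fig:packed}, and sum the per-diamond forced moves using the disjointness of the colour pairs. The one substantive deviation is your insistence on a colour-$1$ ``background'' outermost layer on every diamond, kept adjacent to the top-left tile so that your forced-subsequence argument starts from an already-flooded frontier --- and this is precisely where the proposal fails to reach the stated constant. The wasted layer means each diamond forces only $r-1$ moves of its pair rather than $r$, so your total is $\lfloor c/2\rfloor(r-1)$; with the packing constraint $r\leq n/\sqrt{2p+1}$ and the flooring of $r$, this comes out to roughly $n\sqrt{c}/2-c$, not $\sqrt{c-1}\,n/2-c/2$. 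Your claim that the $-c/2$ slack absorbs \emph{both} the flooring of $r$ and the ``$-1$'' per diamond is false: it can pay for one or the other, not both. The shortfall, about $c/2-n/(4\sqrt{c})$, is genuinely positive once $c\gtrsim n^{2/3}$, a regime in which the stated bound is still positive and non-trivial (e.g.\ for $c=n$ you certify only about $n^{3/2}/2-n$ moves against a claimed $n^{3/2}/2-n/2$).

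The repair is what the paper does: drop the background layer and let all $r$ layers of the $j$th diamond alternate between colours $2j-1$ and $2j$. You then need no adjacency between the outer layers and the top-left tile, because the per-diamond bound is robust to the surroundings: any route from the flooded region to the centre must cross the $r$ concentric layers from the outside in (even when the diamond is cropped by up to two board edges, what remains of each layer together with those edges still separates the centre from the rest of the board), and since adjacent layers differ in colour, each layer is absorbed at a distinct move of its own colour. That yields $r$ forced moves per diamond on its private pair, hence at least $r\lfloor c/2\rfloor$ in total, and with $r=\lfloor n/\sqrt{c}\rfloor$ for even $c$ (and $c-1$ colours used for odd $c$) the arithmetic closes, with the $-c/2$ spent only on the floor. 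Your checkerboard fallback for $c=2$ then becomes unnecessary, though it is not wrong.
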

\begin{proof}

    Suppose first that $c$ is even. For a given integer $r\geq 1$, let $D_{(x,y)}$ be an $r$-diamond where odd layers are coloured $x$ and even layers are coloured $y$. Any board containing $D_{(x,y)}$ requires at least $r$ moves of colours $x$ and $y$. Further, observe that as long as the centre of $D_{(x,y)}$ is in the board, even if it is cropped by at most two edges of the board, at least $r$ moves of colours $x$ and $y$ are still required (see Figure~\ref{fig:kdiamonds}b). We refer to such an $r$-diamond as \emph{good}. The central idea is to populate the board with good $r$-diamonds, $D_{(1,2)},D_{(3,4)}, \ldots, D_{(c-1,c)}$. As each $r$-diamond uses two colours (or one of the two colours if $r=1$) which do not occur in any other diamond, the board must take at least $rc/2$ moves to flood.

    It is not difficult to show that at least $(n^2-r^2)/(2r^2)$ good $r$-diamonds can be embedded in an \recdim{n}{n} board. An example of such a packing for a \recdim{20}{20} board is given in Figure~\ref{fig:packed} (which shows only the edges of diamonds and not their colouring). This scheme generalises well to an \recdim{n}{n} board but the details are omitted in the interest of brevity.
    

    We now take $r=\lfloor n/\sqrt{c}\rfloor < n/2$ and note that $r\geq 1$. As $r<n/2$, the $r$-diamonds are cropped by at most two board edges as required. Therefore we have at least $(n^2-r^2)/(2r^2) \geq c/2 - 1/2$ good $r$-diamonds in our board. However, as the number of good $r$-diamonds is an integer, this is at least $c/2$ as required. Therefore, the number of moves required to flood this board is at least $rc/2 > n\sqrt{c}/2 - c/2\,.$

    Finally, in the case that $c$ is odd we proceed as above using $c-1$ of the colours to give the stated result. \qed
\end{proof}

The next corollary is immediate from Theorems~\ref{thm:good} and~\ref{thm:bad}.

\begin{corollary}
    \label{col:moves}
    $({\sqrt{c-1}\, n-c})/{2} \leq \Mnc \leq 2n+\sqrt{2c}\,n+c$\,.
\end{corollary}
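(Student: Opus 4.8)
The plan is to simply assemble the two bounds just established in Theorems~\ref{thm:good} and~\ref{thm:bad}. Recall that $\Mnc = \max\{m(B)\mid B\in B_{n,c}\}$, so proving the corollary amounts to matching a lower and an upper bound on this maximum.

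For the upper bound I would invoke Theorem~\ref{thm:good} directly: it provides a (polynomial time) algorithm that floods \emph{every} \recdim{n}{n} board with $c$ colours using at most $2n+\sqrt{2c}\,n+c$ moves. Hence $m(B)\le 2n+\sqrt{2c}\,n+c$ for every $B\in B_{n,c}$, and taking the maximum over all such $B$ gives $\Mnc\le 2n+\sqrt{2c}\,n+c$. For the lower bound I would invoke Theorem~\ref{thm:bad}, which (over the range $2\le c\le n^2$ that we assume throughout) guarantees the existence of a particular board $B^\ast\in B_{n,c}$ requiring at least $\sqrt{c-1}\,n/2-c/2$ moves to flood, i.e.\ $m(B^\ast)\ge(\sqrt{c-1}\,n-c)/2$. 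Since $\Mnc$ is a maximum over $B_{n,c}$ and $B^\ast$ is one of the boards being maximised over, we conclude $\Mnc\ge(\sqrt{c-1}\,n-c)/2$. Combining the two inequalities yields the corollary.

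There is essentially no obstacle here, which is why the statement is flagged as immediate; the only points to check are bookkeeping. First, that the expression $\sqrt{c-1}\,n/2-c/2$ appearing in Theorem~\ref{thm:bad} is syntactically the same as the left-hand side $(\sqrt{c-1}\,n-c)/2$ of the corollary, which it is. Second, that the standing assumption $2\le c\le n^2$ from the notation section is exactly the hypothesis under which Theorem~\ref{thm:bad} was proved, so no extra case analysis on $c$ or $n$ is needed. If one wanted the write-up to be completely self-contained, it would take only a single sentence recalling the definition of $\Mnc$ and then one sentence each citing the two theorems.
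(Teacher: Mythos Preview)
Your proposal is correct and matches the paper's approach exactly: the paper states that the corollary is immediate from Theorems~\ref{thm:good} and~\ref{thm:bad}, and you have spelled out precisely how the upper and lower bounds follow from those two theorems.
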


\section{Random boards} \label{sec:random}

In this section, we try to understand the complexity of a random Flood-It board -- that is, a board where each tile is coloured uniformly at random. This question is of both theoretical and practical interest. A common initialisation for Flood-It is to pick the colours of tiles at random and the game designer will surely be keen to know if they are likely to have chosen an instance whose solution is trivially short. The option of having to solve every created instance to test for this possibility is also likely to be unattractive, especially given the complexity results shown in this paper.  Intuitively, one would expect random boards to usually require a large number of moves to flood. Determining how many moves are actually needed turns out to be closely related to a body of research in {\em percolation theory}, the study of connected clusters in random graphs.

Indeed, a problem in percolation theory that is essentially equivalent to the question of the number of moves required for a random Flood-It board has been solved quite recently by Chayes and Winfield \cite{CW1993:random}, and independently Fontes and Newman \cite{FN1993:random}. In our terminology, their result was that a random \recdim{n}{n} Flood-It board with $c \geq 2$ colours requires $\Omega(n)$ moves with high probability. The proofs are lengthy and use some deep previous results in percolation theory.

We now present a greatly simplified proof of the results of \cite{CW1993:random,FN1993:random}, in the case that $c \geq 3$. Formally, our result is as follows.

\begin{theorem}
    \label{thm:random}
    Let $B \in B_{n,c}$ be a board where the colour of each tile is chosen uniformly at random from $\{1,\dots,c\}$. Then, for $c \geq 4$, $\Pr[m(B) \leq 2(3/10-1/c)(n-1)] < e^{-\Omega(n)}$. For $c=3$, $\Pr[m(B) \leq (n-1)/22] < e^{-\Omega(n)}$.
\end{theorem}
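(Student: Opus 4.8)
The plan is to show that a random board, with high probability, contains many ``diamond-like'' obstructions of bounded radius that force a linear number of moves, mimicking the lower-bound strategy of Theorem~\ref{thm:bad} but now exploiting randomness rather than an adversarial construction. The key structural fact I would use is this: any flooding sequence must, in particular, include $r$ moves whose colours alternate along \emph{any} monochromatic-layered $r$-diamond appearing in the board. So if I can find, along a single fixed path (say the main diagonal, or a union of $\Theta(n)$ disjoint short diagonals), many short stretches where the colours are ``locally alternating'' in a way that behaves like a $2$-diamond, I can lower-bound the total number of moves. The cleanest version: look at the top-left tile $(1,1)$ and the diagonal of tiles $(i,i)$; a shortest move sequence of length $m$ induces, reading off the colours it selects, a string $\sigma$; for the region to reach $(k,k)$ the string $\sigma$ restricted to appropriate positions must ``spell out'' the colour changes encountered on any monotone lattice path from $(1,1)$ to $(k,k)$. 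Since colours are i.i.d.\ uniform, along such a path the number of colour \emph{changes} among consecutive tiles is, in expectation, a constant fraction of the path length, and concentrated by a Chernoff/Azuma bound. This forces $m = \Omega(n)$.

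Concretely I would carry out the following steps. First, fix the monotone staircase path $\pi$ through the tiles $(1,1),(1,2),(2,2),(2,3),\dots$ of length $2n-1$; the colours $X_1,X_2,\dots,X_{2n-1}$ along $\pi$ are i.i.d.\ uniform on $\{1,\dots,c\}$. Second, establish the combinatorial lemma: if $m(B)=m$ then there is a sequence of colours $\tau_1,\dots,\tau_m$ such that the ``reduced'' word of $X_1 X_2 \cdots X_{2n-1}$ (delete a letter if it equals its predecessor) is a subsequence of $\tau_1\cdots\tau_m$; hence $m$ is at least the number of colour-changes $N := \#\{ j : X_{j+1}\neq X_j\}$ along $\pi$ --- wait, one must be slightly careful because two distant equal colours along $\pi$ could still be merged by a single move through the bulk of the board, so the honest statement is the subsequence claim, and the reduced word has length exactly $1+N$, but it need not itself be a lower bound on $m$. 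The fix is to instead count \emph{runs}: the reduced word along $\pi$ must be a subsequence of the move-word, and a standard counting argument (each of the $c$ colours contributes, and consecutive reduced letters differ) gives that the move-word has length at least some explicit fraction of $N$; this is exactly the kind of bookkeeping done in the $(c-1)$-approximation analysis in Section~\ref{sec:approx}, run in reverse. Third, compute $\mathbb{E}[N] = (2n-2)(1-1/c)$ and apply a Chernoff bound (the indicator that $X_{j+1}\neq X_j$ for odd $j$ and for even $j$ each form independent families, so split into two sums and union-bound) to get $N \geq (1-\delta)(2n-2)(1-1/c)$ except with probability $e^{-\Omega(n)}$. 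Fourth, combine to extract the stated constants $2(3/10-1/c)$ and $1/22$, tuning $\delta$ and the run-counting constant.

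The main obstacle is precisely the gap between ``number of colour changes along a fixed path'' and ``number of moves'': a single flooding move changes the colour of an entire monochromatic region, which can straddle the path $\pi$ in complicated ways, so one cannot naively say move $k$ handles colour-change $k$. I expect the honest route is the subsequence formulation --- the move-word $\tau$, read in order, must contain the reduced colour word of $\pi$ as a subsequence, because the region reaches the tiles of $\pi$ in path order and each newly-absorbed run of $\pi$ requires its colour to have been selected at or before that point --- and then a purely symbolic lemma: \emph{if a length-$m$ word over $c$ letters contains as a subsequence a word with $R$ runs (maximal constant blocks), and only $c$ letters are available, what is the minimum $m$?} This is where the factor like $3/10$ (rather than the naive $1/2$) comes from; I would prove $m \geq (\text{something like } \tfrac{3}{10})R$ by a charging argument over the move-word, losing a bit because a move may simultaneously advance several far-apart runs. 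So the skeleton is: (i) path + i.i.d.\ colours; (ii) subsequence lemma; (iii) run-counting lower bound $m \ge c_0 R$ with explicit $c_0$; (iv) Chernoff concentration of $R\approx N$; (v) arithmetic. Steps (i), (iv), (v) are routine; step (ii) needs a careful but short argument about how flooding interacts with a fixed path; step (iii) is the real content and the place where the precise constants in the theorem statement are pinned down.
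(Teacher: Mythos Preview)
Your step (ii), the subsequence lemma for a \emph{fixed} path, is false, and this is where the argument breaks. Take the $2\times 2$ board with $B[1,1]=1$, $B[1,2]=2$, $B[2,1]=B[2,2]=3$. The staircase $\pi=(1,1)\to(1,2)\to(2,2)$ has reduced colour word $1,2,3$, but the move word ``$3,2$'' floods the board: playing $3$ merges $(1,1)$ with $(2,1),(2,2)$ through the bulk, bypassing $(1,2)$ entirely, so the region does \emph{not} reach the tiles of $\pi$ in path order and ``$2,3$'' is not a subsequence of ``$3,2$''. More generally, a single fixed path can have arbitrarily many colour changes while $m(B)$ stays bounded, because the flood simply goes around it. (Your step (iii) also reveals the confusion: if the subsequence claim were true you would get $m\ge R$ outright, which is far stronger than the theorem; the factor $3/10$ is not there to slacken a valid inequality but arises for a completely different reason.)

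The paper's proof runs the implication the other way. The correct structural lemma is existential: from any flooding sequence of length $m(B)$ one can extract \emph{some} non-touching path from $(1,1)$ to $(n,n)$ whose number of colour changes (its ``cost'') is at most $m(B)$. Hence $m(B)\le k$ implies the existence of a low-cost path, and one upper-bounds $\Pr[m(B)\le k]$ by a union bound over \emph{all} non-touching paths of all lengths $\ell\ge 2(n-1)$. Two ingredients make this work: (a) along any fixed simple path the indicators $[P_{i+1}=P_i]$ are genuinely independent Bernoulli$(1/c)$, so $\Pr[\cost(P)\le k]\le e^{-D(1-k/\ell\,\|\,1/c)\ell}$ by Chernoff--Hoeffding; and (b) the number of non-touching paths of length $\ell$ from a given tile is at most $2(\sqrt{7})^{\ell}$, obtained by observing that any two consecutive steps have at most $7$ joint possibilities. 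The constants $3/10$ and $1/22$ then fall out of the comparison $(1/2)\ln 7 < D(1-k/\ell\,\|\,1/c)$ needed for the union bound to be summable; they have nothing to do with a run-counting or charging argument.
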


In order to prove this theorem, we will use two lemmas concerning paths in Flood-It boards. Let $P$ be a simple path in a Flood-It board, i.e.\ a simple path on the underlying square lattice\footnote{Simple paths on square lattices have been intensively studied, and are known as {\em self-avoiding walks} \cite{MS1996:SAW}. There are known upper bounds, which are slightly stronger than Lemma \ref{lem:pathcount}, on the number of self-avoiding walks of a given length; however, we avoid these here to keep our presentation elementary.}, where tiles are vertices on the path. Note that a path of length $k$ includes $k+1$ tiles. We say that a simple path $P$ is {\em non-touching} if every tile in $P$ is adjacent to at most two tiles that are also in $P$. Define the {\em cost} of $P$, $\text{cost}(P)$, to be the number of maximal monochromatic connected components of the path, minus one (so a monochromatic path has cost 0).

\begin{lemma}
    \label{lem:pathcost}
    For any $B \in B_{n,c}$, there is a non-touching path from $(1,1)$ to $(n,n)$ with cost at most $m(B)$.
\end{lemma}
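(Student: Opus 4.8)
The plan is to read a witness path straight off an optimal flooding sequence and then repair it. Fix a sequence of $m(B)$ moves that floods $B$ from $(1,1)$, and for $i=0,\dots,m(B)$ let $F_i$ be the (monochromatic) flooded region after move $i$, so $F_0$ is the initial region of $(1,1)$, $F_{m(B)}$ is the whole board, and $c_i$ is the colour chosen at move $i$. For a tile $v$ let $\tau(v)$ be the index of the move at which $v$ first enters the flooded region, with $\tau(v)=0$ if $v\in F_0$; since $(n,n)\in F_{m(B)}$ we have $\tau((n,n))\le m(B)$. Note that if $\tau(v)=s\ge 1$ then $v$ has original colour $c_s$ (it enters because, once $F_{s-1}$ is recoloured to $c_s$, $v$ is joined to it through colour-$c_s$ tiles).

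First I would build a (not necessarily simple) walk $W$ from $(1,1)$ to $(n,n)$ by downward recursion on $\tau$. Put $v_0=(n,n)$. Given $v_{i-1}$ with $s_i:=\tau(v_{i-1})\ge 1$, the flooding rule says that in the board obtained just after move $s_i$ the tile $v_{i-1}$ lies in the colour-$c_{s_i}$ region attached to $F_{s_i-1}$; a shortest such connection is a walk $R_i$ from $v_{i-1}$ to some $v_i\in F_{s_i-1}$, all of whose tiles except $v_i$ have original colour $c_{s_i}$. Since $\tau(v_i)\le s_i-1$, the indices $s_1>s_2>\cdots$ strictly decrease, so after $L\le m(B)$ steps we reach a tile $v_L\in F_0$; close off with a monochromatic walk inside $F_0$ from $v_L$ to $(1,1)$. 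Concatenating these pieces (suitably reversed) gives a walk $W$ from $(1,1)$ to $(n,n)$ whose colour sequence is a block of the colour of $(1,1)$ followed by blocks of colours $c_{s_L},c_{s_{L-1}},\dots,c_{s_1}$ in turn; hence $W$ has at most $L+1\le m(B)+1$ maximal monochromatic runs (equal adjacent blocks only help), i.e.\ ``$\cost$ at most $m(B)$'' in this grouped sense.

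The repair step turns $W$ into a simple, and then chordless (induced), path, which in a grid is automatically non-touching. The one elementary fact needed is that deleting a contiguous block from a sequence never increases its number of maximal monochromatic runs (check a single deletion and iterate). Then: whenever $W$ visits a vertex twice, shortcut between the two occurrences; whenever a simple path contains two non-consecutive adjacent tiles (a chord), shortcut between them. Each operation deletes a contiguous sub-walk, so it keeps the endpoints $(1,1)$ and $(n,n)$, never raises the cost, and strictly shortens the path, hence the process terminates at a chordless simple path $P$. In the (bipartite, triangle-free) grid, a chordless simple path has every tile adjacent only to its path-neighbours, so $P$ is non-touching and $\cost(P)\le m(B)$, as required.

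I expect the main obstacle to be pinning down the recursion and its colour bookkeeping precisely: verifying that the connecting walk $R_i$ supplied by the flooding rule really stays in colour $c_{s_i}$ until it hits $F_{s_i-1}$, that $v_{i-1}$ indeed has original colour $c_{s_i}$, and that the pieces glue up into exactly the claimed $L+1$ colour blocks with $L\le m(B)$. Once that is settled, the simplification-and-de-chording argument is routine, the only thing to watch being the (easy) monotonicity of the run-count under contiguous deletions and the base case $\tau((n,n))=0$ (a monochromatic path, cost $0$).
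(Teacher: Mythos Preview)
Your proof is correct and follows essentially the same approach as the paper's: label tiles by the move at which they enter the flooded region (your $\tau$ is the paper's $m(t)$), trace a walk backwards from $(n,n)$ through decreasing labels to $(1,1)$, and then shortcut touching pairs to obtain a non-touching path without raising the cost. The paper compresses the construction into two sentences, while you spell out the recursion and the run-count monotonicity under contiguous deletions explicitly; your observation that an induced (chordless) path in the grid is automatically non-touching is exactly the paper's ``remove the loop'' step phrased more precisely.
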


\begin{proof}
    For $m(B)=0$ there is nothing to prove, so consider a strategy for completing $B$ which uses $m(B)>0$ moves. Label every tile $t \in B$ with an integer $m(t)$ between 0 and $m(B)$ that indicates the number of the move which changed the colour of $t$ to be the colour of tile $(1,1)$. Then, for each $i \geq 1$, there is a connected component labelled with $i$ which has at least one neighbour labelled with $i-1$. As the label of $(n,n)$ is at most $m(B)$, and the label of $(1,1)$ is 0, there is a simple path from $(1,1)$ to $(n,n)$ with cost at most $m(B)$. This path can be taken to be non-touching, because any pair of adjacent tiles $(t_1,t_2)$ that are on the path but not connected by it correspond to a loop in the path that can be removed without increasing the cost.
    \qed
\end{proof}

\begin{lemma}
    \label{lem:pathcount}
    For any integer $\ell\geq 3$, there are at most $4 \cdot 7^{(\ell-1)/2} < 2 \cdot (\sqrt{7})^\ell$ non-touching paths of length $\ell$ from any given tile.
\end{lemma}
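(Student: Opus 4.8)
The plan is to count the ways of growing the path edge by edge, while bookkeeping two edges at a time so that each edge costs a factor $\sqrt 7$ instead of the naive $3$. Fix the starting tile. A simple path never immediately backtracks, so a non-touching path $P=(v_0,\dots,v_\ell)$ is determined by the direction of its first edge ($4$ choices) together with, for $i=2,\dots,\ell$, whether edge $i$ continues straight, turns left, or turns right relative to edge $i-1$; i.e.\ by a pair $(d,\tau)$ with $\tau\in\{\mathsf S,\mathsf L,\mathsf R\}^{\ell-1}$. The geometric core of the argument is that the non-touching condition forbids $\tau$ from containing $\mathsf{LL}$ or $\mathsf{RR}$: two consecutive left turns (or two consecutive right turns) make the path trace three sides of a unit square, so there are two vertices of $P$ that are three apart along $P$ yet adjacent on the board, contradicting non-touching. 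Since each non-touching path of length $\ell$ yields such a pair, their number is at most $4N_{\ell-1}$, where $N_m$ is the number of strings in $\{\mathsf S,\mathsf L,\mathsf R\}^m$ containing neither $\mathsf{LL}$ nor $\mathsf{RR}$.

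It then remains to bound $N_{\ell-1}$. Conditioning on the last one or two symbols gives the recurrence $N_m=2N_{m-1}+N_{m-2}$ with $N_0=1$, $N_1=3$, hence $N_2=7$ and $N_3=17$. An induction shows $N_m\leq 7^{m/2}$ for all $m\geq 2$: the base cases are $N_2=7=7^{2/2}$ and $N_3=17\leq 7^{3/2}$, and in the step $N_m=2N_{m-1}+N_{m-2}\leq 7^{(m-2)/2}(2\sqrt 7+1)\leq 7^{(m-2)/2}\cdot 7=7^{m/2}$, using only $2\sqrt 7+1\leq 7$. As $\ell\geq 3$ forces $m=\ell-1\geq 2$, there are at most $4\cdot 7^{(\ell-1)/2}$ non-touching paths of length $\ell$; and $4\cdot 7^{(\ell-1)/2}=(4/\sqrt 7)(\sqrt 7)^{\ell}<2(\sqrt 7)^{\ell}$ since $\sqrt 7>2$, which is the claimed strict bound.

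The one delicate point is the geometric claim that non-touching already rules out $\mathsf{LL}$ and $\mathsf{RR}$ and that this suffices: one should verify that no other configuration of two consecutive turns brings the walk back to a board-adjacent tile, and note that any longer-range self-touchings the path may also avoid can only lower the count, so it is safe to ignore them. It is worth stressing that the factor $\sqrt 7$ per edge is deliberately generous --- the true exponential rate of $N_m$ is $1+\sqrt 2<\sqrt 7$, the dominant root of $x^2-2x-1$ --- and this slack is exactly what makes $N_m\leq 7^{m/2}$ hold for the odd values of $m$, in particular for the base case $m=3$. Beyond this, the argument is a routine recurrence-and-induction computation.
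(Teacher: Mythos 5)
Your argument is, at its core, the same as the paper's: the paper obtains the factor $7$ by observing that after a turn the next step has only $2$ admissible continuations, so that a pair of consecutive steps contributes $1\cdot 3+2\cdot 2=7$ options --- which is exactly your exclusion of $\mathsf{LL}$ and $\mathsf{RR}$ from the turn sequence. Where you genuinely improve on the paper is in converting the per-pair bound into $4\cdot 7^{(\ell-1)/2}$: the paper simply groups the $\ell-1$ steps after the first into pairs, which only works cleanly when $\ell-1$ is even (for odd $\ell-1$ a leftover single step contributes a factor $3>\sqrt{7}$, giving $4\cdot 3\cdot 7^{(\ell-2)/2}$, which exceeds the claimed bound). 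Your recurrence $N_m=2N_{m-1}+N_{m-2}$ with the explicitly checked base cases $N_2=7$ and $N_3=17\leq 7^{3/2}$ handles both parities, so on this point your write-up is more careful than the paper's own proof.

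There is, however, a flaw in the step you yourself flagged as delicate and then waved through: non-touching does \emph{not} forbid $\mathsf{LL}$ in the very first pair of turns when $\ell=3$. The path $(0,0)\to(1,0)\to(1,1)\to(0,1)$ has turn sequence $\mathsf{LL}$, yet each of its four tiles has exactly two neighbours on the path, so it is non-touching; the violation you describe requires a fifth vertex (a predecessor of $v_0$ or a successor of $v_3$) to push some tile up to three path-neighbours. Hence an interior tile admits $4\cdot 9=36$ non-touching paths of length $3$, exceeding the claimed $4\cdot 7=28$ --- so the constant in the lemma (and the paper's own proof) is actually wrong at $\ell=3$. For every $\ell\geq 4$ your geometric claim is correct at all positions, including the first and last pairs, and since the lemma is only ever applied to paths of length $\ell\geq 2(n-1)$, nothing downstream is affected; but you should either restrict the claim to $\ell\geq 4$ or treat the boundary pair separately rather than asserting it holds in general.
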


\begin{proof}
    Let $T(\ell)$ denote the maximum number of non-touching paths of length $\ell$ starting from any given tile. $T(\ell)$ can be straightforwardly upper bounded by $4 \cdot 3^{\ell-1}$ for $\ell \geq 1$, as with each step of the path, aside from the first, there are at most 3 choices of direction. We get a tighter bound by analysing a few steps on a non-touching path $P$. Consider the $i$th vertex on $P$, for some $i \geq 2$. As $P$ is simple, there are at most 3 choices for the $(i+1)$th vertex of the path. For vertex $i+2$, if the previous two steps were in the same direction, there are at most 3 more choices. On the other hand, if the previous two were in different directions, there are only at most 2 choices (otherwise, the path would go back on itself, and would not be non-touching). In total, there are only at most 7 possible options for vertices $i+1$, $i+2$. Therefore, for any $\ell \geq 3$, we have $T(\ell) \leq 4 \cdot 7^{(\ell-1)/2}$.
    \qed
\end{proof}

\noindent The last result we will need is the following Chernoff-Hoeffding bound.

\begin{fact}[Hoeffding \cite{hoeffding63}]
    \label{fact:chernoff}
    Let $X_i$, $1 \leq i \leq m$, be independent 0/1-valued random variables with $\Pr[X_i=1]=p$ then, \[\Pr\left[\frac{1}{m}\sum_{i=1}^m X_i \geq p + \epsilon\right] \leq e^{-D(p + \epsilon||p) m} \leq e^{-2\epsilon^2 m}\,,\] where $D(x||y)$ is the Kullback-Leibler divergence $D(x||y) = x \ln (x/y) + (1-x)\ln((1-x)/(1-y))\,.$
\end{fact}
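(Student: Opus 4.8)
The plan is to prove the two inequalities in the statement separately: the first, $\Pr[\frac1m\sum_i X_i \geq p+\epsilon] \leq e^{-D(p+\epsilon\|p)m}$, is the standard exponential-moment (Chernoff) argument specialised to Bernoulli variables, while the second, $e^{-D(p+\epsilon\|p)m} \leq e^{-2\epsilon^2 m}$, reduces to the pointwise estimate $D(p+\epsilon\|p) \geq 2\epsilon^2$, which I would establish by a short calculus argument.

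For the first inequality, write $S=\sum_{i=1}^m X_i$. For any $t>0$ I would apply Markov's inequality to the nonnegative random variable $e^{tS}$ and use independence:
\[
\Pr[S \geq (p+\epsilon)m] = \Pr\bigl[e^{tS} \geq e^{t(p+\epsilon)m}\bigr] \leq e^{-t(p+\epsilon)m}\,\mathbb{E}[e^{tS}] = e^{-t(p+\epsilon)m}\prod_{i=1}^m \mathbb{E}[e^{tX_i}]\,.
\]
Since each $X_i$ is Bernoulli$(p)$, we have $\mathbb{E}[e^{tX_i}] = 1-p+pe^t$, so the bound becomes $\bigl((1-p+pe^t)\,e^{-t(p+\epsilon)}\bigr)^m$. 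Writing $q=p+\epsilon$ and minimising $\phi(t)=\ln(1-p+pe^t)-tq$ over $t$, differentiation gives the stationarity condition $\frac{pe^t}{1-p+pe^t}=q$, i.e.\ $e^t=\frac{q(1-p)}{p(1-q)}$; this is greater than $1$ because $q>p$, so the optimal $t$ is genuinely positive and the use of Markov's inequality above is legitimate. Substituting this value of $t$ back into $\phi(t)$ and simplifying collapses the exponent to exactly $q\ln(q/p)+(1-q)\ln((1-q)/(1-p)) = D(q\|p)$, yielding $\Pr[S\geq (p+\epsilon)m] \leq e^{-D(p+\epsilon\|p)m}$.

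For the second inequality it suffices to show $D(p+\epsilon\|p)\geq 2\epsilon^2$. Fixing $p$, I would regard $f(\epsilon)=D(p+\epsilon\|p)$ as a function of $\epsilon\in[0,1-p]$ and set $g(\epsilon)=f(\epsilon)-2\epsilon^2$, with the aim of proving $g\geq 0$. Clearly $f(0)=0$, and a direct computation gives
\[
f'(\epsilon) = \ln\frac{(p+\epsilon)(1-p)}{p(1-p-\epsilon)}\,,
\]
so $f'(0)=0$; hence $g(0)=g'(0)=0$. Differentiating once more yields
\[
f''(\epsilon) = \frac{1}{p+\epsilon}+\frac{1}{1-p-\epsilon} = \frac{1}{(p+\epsilon)(1-p-\epsilon)} \geq 4\,,
\]
where the final bound holds because the denominator is a product of two nonnegative numbers summing to $1$ and is therefore at most $1/4$. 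Consequently $g''(\epsilon)=f''(\epsilon)-4\geq 0$, and integrating twice from $0$ (using $g(0)=g'(0)=0$) gives $g(\epsilon)\geq 0$, i.e.\ $D(p+\epsilon\|p)\geq 2\epsilon^2$. Combining the two parts completes the proof.

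The argument is almost entirely routine; the only points requiring a little care are verifying that the optimising $t$ is positive (so the Markov step is valid) and checking that the back-substitution really telescopes to the Kullback--Leibler divergence. The genuine \emph{crux} of the whole statement is the convexity estimate $f''\geq 4$, whose proof rests on the elementary observation that $(p+\epsilon)(1-p-\epsilon)\leq 1/4$; everything else is bookkeeping.
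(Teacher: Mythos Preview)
Your proof is correct and follows the standard Chernoff--Hoeffding argument: the exponential-moment/Markov step optimised over $t$ gives the KL bound, and the calculus lemma $f''(\epsilon)=\frac{1}{(p+\epsilon)(1-p-\epsilon)}\geq 4$ yields the Pinsker-type estimate $D(p+\epsilon\|p)\geq 2\epsilon^2$. The paper itself does not supply a proof of this statement; it is recorded as a \emph{Fact} with a citation to Hoeffding's original paper, so there is nothing in the paper to compare your argument against. What you have written is essentially the textbook derivation (and indeed the one in Hoeffding's paper), so it is entirely appropriate here.
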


\noindent We are finally ready to prove Theorem \ref{thm:random}.

\begin{proof}[of Theorem \ref{thm:random}]
    For any $k \geq 0$, and for any board $B$ such that $m(B)\leq k$, by Lemma \ref{lem:pathcost} there exists a non-touching path from $(1,1)$ to $(n,n)$ with cost at most $k$. So consider an arbitrary non-touching path $P$ in $B$ of length $\ell$ between these two tiles, and let $P_i$ denote the $i$th tile on the path, for $1 \leq i \leq \ell + 1$. Note that $\ell\geq 2(n-1)$. Then $\cost(P)=|\{i:P_{i+1} \neq P_i\}|$, or equivalently $\cost(P)=\ell - |\{i:P_{i+1} = P_i\}|$. Define the 0/1-valued random variable $X_i$ by $X_i=1 \Leftrightarrow P_{i+1} = P_i$. Then, as the colours of tiles are uniformly distributed, $\Pr[X_i=1] = 1/c$ for all $i$, and
    \[ \Pr[\cost(P) \leq k] = \Pr\left[\sum_{i=1}^\ell X_i \geq \ell-k\right] \leq e^{-D(1-k/\ell\,||\,1/c)\ell}, \]
    where we use Fact \ref{fact:chernoff}. Thus, using the union bound over all paths of length at least $2(n-1)$ from $(1,1)$ to $(n,n)$, we get that the probability that there exists {\em any} path of cost at most $k$ is upper bounded by
    \begin{equation}
    \label{eqn:union}
    2 \sum_{\ell=2(n-1)}^{\infty} (\sqrt{7})^\ell e^{-D(1-k/\ell\,||\,1/c)\ell} = 2 \sum_{\ell=2(n-1)}^{\infty} e^{((1/2)\ln 7 - D(1-k/\ell\,||\,1/c))\ell},
    \end{equation}
    where we use the estimate for the number of paths which was derived in Lemma~\ref{lem:pathcount}. In the final part of the proof, we consider the cases $c \geq 4$ and $c=3$ separately.

    First suppose that $c\geq 4$. We take $k = 2(3/10-1/c)(n-1) \leq (3/10-1/c)\ell$, as in the statement of the theorem, and use $D(1-k/\ell\,||\,1/c) \geq 2(1-k/\ell-1/c)^2$ (from Fact~\ref{fact:chernoff}) to obtain the bound
    \[ 2 \sum_{\ell=2(n-1)}^{\infty} e^{((1/2)\ln 7 - 2(1-k/\ell-1/c)^2)\ell} \leq 2 \sum_{\ell=2(n-1)}^{\infty} e^{((1/2)\ln 7 - 49/50)\ell}. \]
    As $49/50 > (1/2)\ln 7 \approx 0.973$, this sum is exponentially small in $n$.

    Lastly, suppose that $c=3$. In this case, our choice of $k$ above is negative. Instead we take $k=(n-1)/22$, which implies $1-k/\ell\geq 43/44$. In order to obtain a sufficiently tight bound on $D(1-k/\ell\,||\,1/c)$, we use the explicit formula in Fact~\ref{fact:chernoff} to show that $D(43/44\,||\,1/3) > 0.974 > (1/2)\ln 7$, which implies that there is a bound in Equation~(\ref{eqn:union}) which is exponentially small in $n$. This completes the proof.
    \qed
\end{proof}

\section{Conclusion and open problems} \label{sec:conc}
We have shown that, for three or more colours, \Floodit{} is \NPtime-hard. However, for two colours, the relaxed version of the problem termed \ColoroidF{} in which we are allowed to flood fill from any tile of the board remains in \Ptime. Some interesting open questions remain. First, the complexity of solving \ColoroidF{} on a height 2 board remains unresolved. Second, we conjecture that the true lower bound for random boards is $\Omega(\sqrt{c}n)$, matching the general upper bound. Interestingly, the percolation theory techniques that we are aware of do not appear to allow for super-linear lower bounds of the sort that would be required.

\section{Acknowledgements}
AM was funded by an EPSRC Postdoctoral Research Fellowship. MJ was supported by the EPSRC. We are grateful to Dave Arthur for producing an implementation of Flood-It, complete with examples of our NP-hardness reductions.  We
would also like to thank Leon Atkins, Aram Harrow, Tom Hinton and Alex Popa for many helpful and encouraging discussions.

\bibliographystyle{plain}
\bibliography{flood-it}

\end{document}